\pgfplotsset{compat=1.18}
\renewcommand{\epsilon}{\varepsilon}
\newcommand{\R}{\mathbb{R}}
\newcommand{\norm}[1]{\left\lVert #1 \right\rVert}
\newcommand{\betrag}[1]{\left| #1 \right|}
\newcommand{\E}{\mathbb{E}}
\newcommand{\eps}[2]{\epsilon( #1 \, | \, #2)}
\newcommand{\w}{\mathrm{w}}
\renewcommand{\v}{\mathrm{v}}
\newcommand{\W}{\mathrm{W}}
\renewcommand{\P}{\mathcal{P}}
\newcommand{\G}{\mathcal{G}}
\newcommand{\pup}{\overline{p}}
\newcommand{\pdown}{\underline{p}}
\newcommand{\epsp}[3]{\epsilon^{#3}( #1 \, | \, #2)}
\renewcommand{\1}{\mathrm{1}}
\newcommand{\apool}{\alpha_{\pool}}
\newcommand{\epsilonS}{\epsilon_{\Shannon}}
\newcommand{\epsilonNT}{\epsilon_{\Noisy}}
  \def\\{}%
  \def\texttt#1{<#1>}%
\DeclareMathOperator{\Tr}{Tr}
\DeclareMathOperator{\supp}{supp}
\DeclareMathOperator{\interim}{int}
\DeclareMathOperator{\ex}{ex}
\DeclareMathOperator{\pool}{pool}
\DeclareMathOperator{\Shannon}{S}
\DeclareMathOperator{\Noisy}{NT}
\DeclareMathOperator*{\argmax}{arg\,max}
\DeclareMathOperator*{\argmin}{arg\,min}
\definecolor{myyellow}{rgb}{1,1,0}
\definecolor{myblue}{rgb}{0,0,1}
\definecolor{myviolet}{rgb}{1,0,1}
\definecolor{mygreen}{rgb}{0,1,0}
\newtheorem{thm}{Theorem}[]
\newtheorem{lem}[]{Lemma}
\newtheorem{proposition}[]{Proposition}
\theoremstyle{definition}\newtheorem{definition}[]{Definition}
\theoremstyle{definition}\newtheorem{example}[]{Example}
\theoremstyle{definition}\newtheorem{remark}[]{Remark}
\theoremstyle{definition}
\title{Underreaction and dynamic inconsistency in communication games under noise}
\author{Gerrit Bauch\thanks{Financial support through the German Research Foundation Grant Ri-1128-9-1, the German Academic Exchange Service, and the BGTS is gratefully acknowledged. I thank Sarah Auster, Andreas Blume, Yves Breitmoser, Simon Grant, Frank Riedel, Joel Sobel  and the participants of the SAET 2023 in Paris for their comments, remarks and suggestions.}}
\begin{document}

\maketitle

\begin{abstract}
\noindent%
Communication is rarely perfect, but rather prone to error of transmission and reception. Often the origin of these errors cannot be properly quantified and is thus imprecisely known. We analyze the impact of an ambiguous noise which may alter the received message on a communication game of common interest. The noise is ambiguous in the sense that the parameters of the error-generating process and thus the likelihood to receive a message by mistake are Knightianly unknown. Ex-ante and interim responses are characterized under maxmin preferences. While the sender can disregard ambiguity, the receiver reveals a dynamically inconsistent, but astonishing behavior under a quadratic loss. Their interim actions will be closer to the pooling action than their ex-ante ones, as if facing a higher likelihood of an occurring error. 
\end{abstract}

\section{Introduction}
\noindent%
Any kind of information transmission is subject to error. It often cannot be guaranteed that messages are received as originally sent. 
The precise nature and form of the error is hardly ever known and its structural parameters usually remain uncertain. %
In this article, we investigate a cheap talk game under common interest in which an ambiguous noise channel confounds the communication between an informed sender and an uninformed action-taking receiver. While the loss minimizing communication strategy of the sender is unchanged by uncertainty, the receiver is unsettled by the uncertainty and underreacts to the received messages in an interesting way. After observing the message, the receiver responds less sensitive to the new information than originally targeted. More precisely, the receiver's actions are shifted farther towards the pooling action, i.e., the optimal action given their prior belief, after receiving a message.
Our result suggests that ambiguity can reveal behavioral traits, providing an explanation of why individuals react less to information than they expect to ex ante. Becoming afraid of having presumed the wrong worst-case situation, they re-assess their situation at the time that new information arrives. They do so in a way that makes them less confident about the received message being the one sent. Indeed, if they believe that the received message can in principle facilitate the coordination among the two agents, the worst-case is to receive that message by error. For any such \emph{regular word}, the receiver will worry about making a mistake and wants to stick closer to their prior belief, reacting less to the message than originally planned.\\
We formalize this line of thought by combining the communication model of Voronoi languages, \cite{V}, with an ambiguous noisy channel, represented by a set of stochastic noisy channels studied in \cite{NT}. Suppose there are two agents --- a sender and a receiver. While the sender has knowledge about a certain state of the world, the receiver takes an action. The agents' face a loss that depends on both the state and the state. Studying communication between the two agents, we may assume they share a common interest. We thus follow \cite{grice1975logic} who acknowledges communication to first and foremost serve a cooperative purpose. Efficient communication among the agents is frustrated by uncertainty about the precise parameters of the noisy channel. Agents seek to minimize their worst-case expected loss à la \cite{gilboa1989maxmin}. %
The communication game is modelled as an extensive form game with imprecise probabilistic information. Ambiguity formally comes into play by Nature following an Ellsberg strategy, drawing the parameters of the noisy channel from an Ellsberg urn. Information sets describe an agent's lack of knowledge about the state and/or the noise at the time their action is required. In contrast to classical literature, agents cannot form a single belief about their exact position within the information set due to ambiguity. They rather face a whole set of beliefs at each information set resulting from prior by prior updates. Following their preferences for uncertainty, they take actions to minimize their worst-case expected loss. %
As a result from full Bayesian updating, c.f., \cite{pires2002rule}, dynamic inconsistency arises, i.e., it is no longer optimal to follow the ex-ante optimal plan at the interim stage. We show that this change of worst-case beliefs follows an interesting pattern for the response of the receiver. Decomposing the optimal actions into a convex combination of the pooling action and optimal action under no error, interim actions turn out to be generically closer to the pooling action than ex-ante ones. In terms of the literature on non-Bayesian learning, c.f., \cite{epstein2010non}, agents \emph{underreact} to received information, by assigning more weight to their prior belief.\\

This article joins the economic literature on cheap talk games with common interest and puts a special focus on noise under uncertainty. Most importantly, the studied framework leans on the following two publications. Firstly, the cheap talk model of Voronoi languages, \cite{V}, in which the sender is restricted to a finite set of messages they can send, rendering perfect separation of an infinite state space impossible. As a consequence, messages are associated to a tessellation of the (normed) state space, giving it a geometric structure interpreted as a grammar, c.f., \cite{Jaeger}, \cite{gardenfors2004conceptual}. Secondly, the article on noisy talk, \cite{NT}, precludes perfect communication by introducing a class of stochastic noise. The error confounds sent and received messages, but can surprisingly lead to welfare improvements if the preferences of the sender and the receiver are not aligned.\\%
Game theory provides a neat method to model communication and has thus sparked the interest of economists for a long time. The theory of signaling games dates back to \cite{spence1978job} that has subsequently coined the notions of separating and pooling equilibria. The seminal work of \cite{CS} drops costs of signaling, shaping the class of \emph{cheap talk games}.
Imperfect communication introduces trade-offs between efficiency and understanding. \cite{nowak1999evolution} investigate an evolutionary setting and find that noise favors communication with only finitely many messages. For a finite state space, \cite{hernandez2014nash} relate a set of receiver-preferred Nash equilibria to classical communication channels à la \cite{shannon1948mathematical}. \cite{cremer2007language} study efficient communication with broad terms under bounded rationality of the sender and if the receiver faces decoding costs that increase in the breadth of the word, i.e. the number of states covered. If this loss depends on the state and the action taken rather than just the breadth, \cite{sobel2015broad} recovers convexity of the states lumped together for each message. \cite{martel2019ratings} argue for coarse communication to arise even in the absence of conflict or bounded rationality. Structural relations between a metric message space and the tessellation of the type space is studied by \cite{bauch2021effects}. \cite{rubinstein1989electronic} shows that optimal strategies can differ significantly if the common knowledge assumption is disturbed by errors in communication. Communication can also be impaired if agents are ignorant or do not share the same vocabulary, \cite{blume2013language}.\\%
It has been long known that dynamic consistency and consequentialism lead to Bayesian updating, c.f., \cite{ghirardato2002revisiting} for a discussion. Dropping dynamic consistency thus makes room for non-Bayesian learning and thus the notions for over- and underreaction. \cite{ortoleva2012modeling} studies agents who, rather than performing a Bayesian update, switch their subjective belief, referred to as paradigm change. If paradigm changes rarely occur, agents behave as-if Bayesian, c.f., \cite{weinstein2012provisional}. A nice definition of over- or underreacting is given by \cite{epstein2010non}. The importance of multiple priors for asymmetric updating has been experimentally confirmed by \cite{de2017updating}. Subjects perform a Bayesian update when signals confirm their prior, otherwise they overreact.
\\%
Cheap talk games as the one studied in this article usually have a plethora of equilibria. For instance, any equilibrium gives rise to another one by a permutation of messages and their responses. In our case, the sender's communication strategy should be measurable to guarantee integrability. To avoid issues with Bayesian updates word that are used should be sent with positive probability. Furthermore, communication should induce a tessellation consisting of convex cells to provide a convenient mathematical structure. We base these assumption on well-known linguistic concepts, primarily the \emph{cooperative principle} and its maxims of \cite{grice1975logic}, as well as on the geometric properties of languages discussed by \cite{gardenfors2004conceptual} and \cite{Jaeger}.\\

The remainder of this article is structured as follows. In Section \ref{Section: Inertia Model}, the formal model and setup is introduced. Section \ref{Section: Equilibrium Analysis} analyzes the model as an extensive form game under uncertainty and characterizes the agents' best replies. We further prove existence of efficient ex-ante equilibria. In Section \ref{Section: Inertia inconsistency}, we compare ex-ante and interim optimal responses and state as our main result that interim optimal actions are more pooling than ex-ante ones. We illustrate this insight in an example. Section \ref{Section: Inertia Conclusion} sums up our findings. The Appendix relates the stochastic version of the error channel used in this article to the one in \cite{bauch2021effects} and contains proofs of our mathematical statements as well as calculations.

\section{Model}\label{Section: Inertia Model}
\noindent%
Let $T \subseteq \R^L$ be a state space with a prior distribution $\mu_0$ with positive density function $f_0$. We assume that $T$ is a strictly convex space endowed with a norm $\norm{.}$ and compact w.r.t.\ $\norm{.}$. The sender observes their type $t \in T$ prior to sending a message $\v \in \W$ out of a finite message set $\W$ with at least two elements. During the transmission of the message, an error may confound $\v$ to a received message $\w \in \W$. The receiver interprets any observed message $\w$ as a type $s \in T$. At the end of the game, both agents face an identical loss $\ell(\norm{t-s})$ where $\ell \colon \R_{\geq 0} \to \R$ is a strictly convex function. The specific class of error channel we consider, c.f., \cite{NT}, is of the following form. If $\v$ has been sent, the receiver observes $\w$ with probability
\begin{equation}
\epsp{\w}{\v}{p,G} = (1-p) \cdot \mathds{1}_{\v}(\w) + p \cdot G(\w), \label{eq:noise channel}
\end{equation}
where $p \in [0,1]$ is the \emph{error probability} and $G \in \Delta(\W)$ an \emph{error distribution}. That is, a coin flip according to $p$ decides whether or not the message is transmitted as originally sent. In case of an error, a word is drawn from the error distribution $G$ to determine the received message. Note that the originally sent word might be drawn from $G$. A nice property of this error channel is that it is additively separable in the sense that its associated loss is the sum of the losses if an error realizes and if the message is correctly transmitted.

Strategies are denoted by a (measurable) \emph{communication device} $\pi \colon T \to \W$ for the sender and \emph{interpretation (map)} $\alpha \colon \W \to T$ for the receiver.
If the parameters $p,G$, and thus $\epsilon = \epsilon^{p,G}$, were known, the expected payoff from communication given $(\pi,\alpha)$ would be
\begin{equation}
L_{\epsilon}(\pi,\alpha) = \E_{\mu_0} \left[ \sum_{\w} \eps{\w}{\pi(t)} \cdot \ell(\norm{t-\alpha(\w)}) \right]
.
\end{equation}
Any communication device $\pi$ defines a (measurable) tessellation of the state space $T$ via its pre-images. For each message $\w \in \W$ we call $C^{\pi}(\w) \colon = \pi^{-1}(\w)$ the \emph{cell} corresponding to $\w$. If $\pi$ is understood, we frequently drop $\pi$ from the notation and simply write $C(\w)$. The \emph{support} of $\pi$ consists of all messages $\w$ that are sent with positive probability, i.e., $\mu_0(C(\w))>0$, and is denoted by $\supp(\pi)$. If $\supp(\pi) = \W$ we say that $\pi$ has a \emph{full vocabulary}. For any measurable subset $C \subseteq T$ of positive $\mu_0$-mass we can update the prior probability $\mu_0$ conditioned on the event $t \in C$. The corresponding expectation operator is denoted by $\E_{\mu_0, C}[.]$.\footnote{In the subsequent analyses, the words not in the support of $\pi$ will correspond to empty cells. If the receiver received such a word, they stick to their prior belief $\mu_0$ and expressions of the form $\mu_0(C) \cdot E_{\mu_0, C}[.]$ are set to $0$ by convention.} Finally, we let $\alpha_C \colon = \min_{s \in T} \E_{\mu_0,C}[\ell(\norm{t-s})]$ be the (unique) \emph{Bayesian estimator} of a cell $C$. The Bayesian estimator of the state space $\apool \colon = \alpha_T$ is called the \emph{pooling action}. Since the pooling action is the best response of a receiver who is ignorant of any further information except for their prior belief, it will play a crucial role as a measure for the amount of information revealed through the sender's message. The \emph{center of gravity} $\E_{\mu_0,C}[t]$ will be of particular interest later as it is the Bayesian estimator under quadratic loss.\\

We introduce uncertainty in the noisy channel by doing so for the respective parameters. Let $\mathcal{P} = [\pdown, \pup] \subseteq (0,1)$ be a set of possible error probabilities $p$ that the agents agents deem possible. The values $p=0,1$ are hereby excluded for most of the exposition as they refer to degenerate cases, corresponding either to the setting without uncertainty, already studied by \cite{V}, or where communication does not provide any information. Likewise, $\mathcal{G} \subseteq \Delta(\W)$ is a compact subset that the agents believe to contain the true error distribution. The decision makers face uncertainty about the true parameters of the noisy channel by independent\footnote{See \cite{muraviev2017kuhn} for a precise definition of independence in imprecise probabilistic settings.} draws from $\mathcal{P} \times \mathcal{G}$. Thus, $\left\{ \epsp{.}{.}{p,G} \right\}_{p \in \mathcal{P}, G \in \mathcal{G}}$ is the set of error channels the decision makers deem possible. We assume that the agents deal with uncertainty by only considering the worst-case scenario for each of their possible actions. Hence, the agents try to minimize their maximal expected loss à la \cite{gilboa1989maxmin}.

\section{Equilibrium Analysis}\label{Section: Equilibrium Analysis}
\noindent%
Our setting of communication with an ambiguous noise channel is formally described by the extensive form game illustrated in Figure \ref{Figure: extensive form game}. It contains the timing of the game and explains its information structure as follows.
\begin{enumerate}[1.]
    \item Following \cite{harsanyi1967games}, Nature moves first. A (pure) action of Nature consists of a vector $(t,c,\hat{\w})$, where $t \in T$ is the state of the world drawn from $\mu_0$, $c \in \{ 0, 1\}$ denotes if an error is realized ($c=1$) or not ($c=0$) by a draw according to $p$ and $\hat{\w} \in \W$ will be the received message in case of a realized error (i.e., if $c=1$) drawn from $G$. Knightian uncertainty comes into play by letting Nature follow an \emph{Ellsberg strategy} as defined in \cite{riedel2014ellsberg}: Nature draws their (mixed) strategy from an Ellsberg urn, whose elements are probability distributions of the form $(\mu_0, p, G) \in \{ \mu_0 \} \times \mathcal{P} \times \mathcal{G}$ over the outcomes $(t,c,\hat{\w})$.
    \item In the next step, the sender perfectly observes $t$ while facing an information set indexed by $(c,\hat{\w})$. As in \cite{muraviev2017kuhn}, we assume that the sender has a set of beliefs for each of their information sets. They can peg their action on $t$ and send a message $\pi(t) =  \v$.
    \item Afterwards and depending on $(c,\hat{\w})$, the receiver will observe a word $\w$ which is either $\v$ (if $c = 0$) or $\hat{\w}$ (if $c=1$). Thus, the receiver faces an information set indexed by the observed message $\w$ for which they also have a set of beliefs. The receiver's action is $\alpha(\w) = s$.
    \item Eventually, the loss $\ell(\norm{t-s})$ realizes for both agents.
\end{enumerate}

\begin{figure}
    \centering
    \begin{tikzpicture}[scale=4.5]
    
    	
        \coordinate (N) at (0,0);

		\coordinate (S2) at (-1.05,0);
		\coordinate (S5) at (1.05,0);		
		        
        \foreach \i [evaluate=\i as \j using int(\i+3),evaluate=\i as \k using int(\i-2)] in {1,3}{
        	\coordinate (S\i) at (-0.8,0-\k/4);
        	\coordinate (S\j) at (0.8,0+\k/4);
        }

		\coordinate (R1) at (0,1);
		\coordinate (R2) at (-0.5,1);
		\coordinate (R6) at (0.5,1);
		\coordinate (U1) at (0,1.5);
		\coordinate (U2) at (-0.5,1.5);
		\coordinate (U6) at (0.5,1.5);

        \foreach \i in {3,4,5}{
        	\coordinate (R\i) at (-0.5+\i/2-3/2,-1);
        	\coordinate (U\i) at (-0.5+\i/2-3/2,-1.5);
        }

        \filldraw (N) circle (0.6pt);
        
        \foreach \i in {1,...,6}{
        	\filldraw (S\i) circle (0.6pt);
        	\filldraw (R\i) circle (0.6pt);
        	\filldraw (U\i) circle (0.6pt);
        }

        \foreach \i in {1,...,3}{
        	\draw (S\i)--node[midway,left]{$\pi(t)$}(R\i);
        }
        \foreach \i in {4,...,6}{
        	\draw (S\i)--node[midway,right]{$\pi(t')$}(R\i);
        }

        \foreach \i in {1,2,6}{
            \draw (R\i)--node[left,midway]{$\alpha(\w)$} (U\i);
        }

        \foreach \i in {3,4,5}{
            \draw (R\i)--node[left,midway]{$\alpha(\w')$} (U\i);
        }

        \draw (N) -- node[midway,above]{$(t,0,\w)$}(S1);
        \draw (N) -- node[midway,above]{$(t,1,\w)$} (S2);
        \draw (N) -- node[midway,below]{$(t,1,\w')$} (S3);
        \draw (N) -- node[midway,below]{$(t',0,\w')$} (S4);
        \draw (N) -- node[midway,above]{$(t',1,\w')$} (S5);
        \draw (N) -- node[midway,above]{$(t',1,\w)$} (S6);
        
        \foreach \i in {-1,1}{
	        \draw (\i*0.9,0) ellipse[x radius=0.35cm, y radius=0.6cm];
	    	\draw (0,\i) ellipse[x radius=0.65cm, y radius=0.2cm];
	    }

        \node at (N) [above,shift={(0,0.2)}]{Nature};
		\foreach \i in {-1,1}{
			\node at (\i*1.4,0) []{Sender};
			\node at (0,\i*0.7) []{Receiver};
        }

		\node at (U6) [above,scale=0.8]{$\ell(\norm{t'-\alpha(\w)})$};
		\node at (U3) [below,scale=0.8]{$\ell(\norm{t-\alpha(\w')})$};
		\foreach \i [evaluate=\i as \j using int(6-\i)] in {4,5}{
			\node at (U\j) [above,scale=0.8]{$\ell(\norm{t-\alpha(\w)})$};
			\node at (U\i) [below,scale=0.8]{$\ell(\norm{t'-\alpha(\w')})$};
		}

    \end{tikzpicture}
    \caption{Outline of the communication game under an uncertain noise in extensive form with information sets. Nature picks a state $t \sim \mu_0$, decides whether or not an error occurs ($c=0,1$) and chooses an error message $\hat{\w}$ ($\w$ or $\w'$). Using an Ellsberg strategy over the error, she keeps the agent uncertain about her mixed strategy $(p, G) \in \mathcal{P} \times \mathcal{G}$ over $(c,\hat{\w})$ which is drawn from an Ellsberg urn and leads to sets of beliefs at each information set. The sender chooses a word to be sent according to $\pi$ and the perfectly observed state $t$ while the receiver picks an interpretation $\alpha$ according to the word received.}
    \label{Figure: extensive form game}
\end{figure}
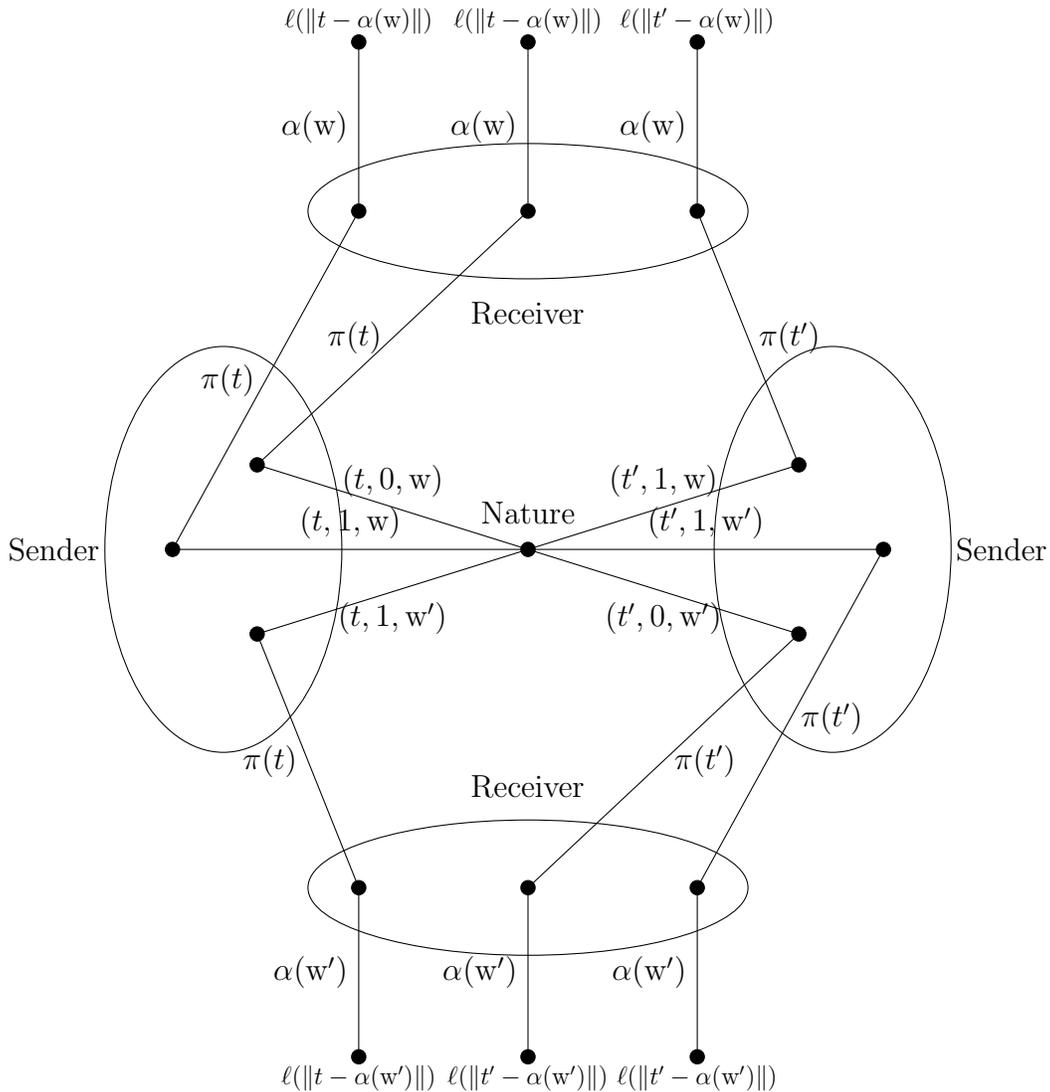

In equilibrium, we require two properties. Firstly, the agents' sets of beliefs at each of their information sets need to be derived from prior moves, i.e., from Nature's Ellsberg strategy and, for the receiver, sender's strategy. Secondly, the strategy of each player must be loss minimizing at any of their information sets. We discuss the belief updates the agents face at each of their information sets and the corresponding optimal plays in the upcoming subsections.

\subsection{Information Sets and Belief Updates}
\noindent%
Given the Ellsberg strategy of Nature and following full Bayesian updating, the sender who observes the true state $t \in T$ faces the set $\mathcal{P} \times \mathcal{G}$ of distributions over their position within the information set given by $\{ (t,c,\hat{\w}) \mid c \in \{0,1\}, \hat{\w} \in \W \}$, where $c$ is drawn from a coin toss with probability $p \in \mathcal{P}$ and the error message $\hat{\w}$ is drawn from $G \in \mathcal{G}$.\\
Likewise, the receiver, who observes some $\w \in \W$, cannot distinguish whether or not $\w$ was the message the sender intended to convey. Thus the receiver faces the set of probability measures $\{\mu_0\} \times \mathcal{P} \times \mathcal{G}$ conditioned on having observed $\w$ and knowing $\pi$ in equilibrium. Note that the receiver's payoff-maximizing action will only depend on the induced marginal beliefs on $T$. Each fixed $(p,G) \in \mathcal{P} \times \mathcal{G}$ yields one of these posterior marginal belief(s), denoted by $\mu^{\pi,p,G}_{\w}$ and described via its density
\begin{align}
    f^{\pi,p,G}_{\w}(t)\colon = 
     \frac{\epsp{\w}{\pi(t)}{p,G}}{\int_T \epsp{\w}{\pi(t)}{p,G} \, \mu_0(\mathrm{d}t)} \cdot f_0(t)
    = \frac{(1-p) \cdot \1_{\w}(\pi(t)) + p \cdot G(\w)}{(1-p) \cdot \mu_0(C(\w)) + p \cdot G(\w)}\cdot f_0(t),\label{eq: updated posterior}
\end{align}
if $(1-p)\cdot \mu_0(C(\w)) + p G(\w)$ --- the expected frequency of observing $\w$ under $(\pi, p,G)$ --- is non-zero. This is especially the case for every $\w \in \supp(\pi)$. In equilibrium and for an observed word $\w$, the receiver's best reply depends only on the set of posterior beliefs over $T$, which is given by $\{ \mu^{\pi,p,G}_{\w} \}_{p,G}$.\bigskip

\subsection{Best Replies}
\noindent%
Besides beliefs being in line with the play, we require the agents' behavior to be rational, i.e., playing a best response to their peer's strategy in order to minimize their loss in communication. Under Knightian uncertainty a readily available decision rule is to pick any action minimizing the worst-case (here: the maximal) expected loss, c.f., \cite{gilboa1989maxmin}. In contrast to usual Bayesian extensive form games with incomplete information and fully supported marginals, an initially optimal strategy might no longer stay so when an agent received their information in a setting of multiple priors, c.f., \cite{pahlke2022dynamic}.\\

To understand the arise of this phenomenon better, one distinguishes two instances of time --- before or after the agent has received information and is required to act. At the ex-ante stage, an agent has not yet observed their private information. Any strategy that they might think of before their play is a contingent plan: They assign an action to each possible future state realization. Being rational, they seek to minimize their worst-case expected loss given their system of beliefs from the very beginning of the game. At the interim stage, an agent has observed their private information and seeks to play an action that minimizes their expected loss given the information at hand. As we will see, these approaches do not need to amount to the same behavior.

\paragraph{Ex-ante best reply}
\noindent%
At the ex-ante stage, players contemplate their best contingent plan to minimize their maximal worst-case loss at (or even before) the very beginning of the game. Facing the peer's equilibrium strategy $\alpha$ resp.\ $\pi$, the agents' ex-ante worst-case expected losses are equal and given by
\begin{align}
    &U_{\ex}^S (\pi,\alpha) = U_{\ex}^R (\pi,\alpha)\nonumber\\
    =&\max_{(p,G) \in \mathcal{P} \times \mathcal{G}} (1-p) \cdot \E_{\mu_0}[\ell(\norm{t-\alpha(\pi(t))})] + p \cdot \sum_{\w} G(\w) \cdot \E_{\mu_0}[\ell(\norm{t-\alpha(\w)})],\label{eq: ex ante worst-case utility}
\end{align}

We examine the above expression from each agent's point of view, starting with the sender. From \eqref{eq: ex ante worst-case utility}, it becomes clear that the sender's strategy $\pi \colon T \to \W$ only impacts their loss if the message is correctly transmitted. Consequently, a best reply for the sender minimizes
\begin{align*}
    \E_{\mu_0}[\ell(\norm{t - \alpha(\pi(t))})]
\end{align*}
irrespective of $p,G$. The sender can achieve this point-wise in the integrand by picking
\begin{equation}
    \pi(t) = \v \in \argmin_{\w} \norm{t-\alpha(\w)} \label{eq: ex-ante sender optimization}
\end{equation}
for every $t \in T$, indeed, the sender needs to do this $\mu_0$-almost surely. It is worth noting that this requirement is independent of uncertainty. In fact, it is exactly the behavior of a sender in the classical case of Voronoi languages without a noisy communication channel, c.f., \cite{V}.\\

Among the best replies to $\alpha$, we can always choose a communication device $\pi$ fulfilling the following properties that are not only mathematically convenient, but also formalize aspects of linguistic theories. Firstly, $\pi$ is measurable, c.f., \cite{V} or \cite{bauch2021effects}, thus avoiding \emph{obscurity of expression}, a submaxim of the \emph{cooperative principle}, c.f., \cite{grice1975logic}. Secondly, cells with measure zero are empty, following the \emph{maxim of quantity} of the cooperative principle. Thirdly, for certain types of norms, such as the Euclidean one, the induced cells are convex with zero-measure boundaries, formalizing the idea of \emph{convex categories}, c.f., \cite{gardenfors2004conceptual} and \cite{Jaeger}. In the following, we are going to assume that $\pi$ is always chosen to abide by the above properties.\\

In contrast, the presence of uncertainty influences the receiver's action choices significantly. For equilibrium considerations, we take into account the optimal strategy of the sender. The worst-case value of the error probability $p$ can then readily be identified as the maximal value $p = \pup$, yielding the following ex-ante worst-case expected loss for the receiver
\begin{align}
    U_{\ex}^R(\pi, \alpha) = (1-\pup) \cdot \E_{\mu_0}[\min_{\v} \ell(\norm{t-\alpha(\v)})] + \pup \cdot \max_{G} \sum_{\w} G(\w) \cdot \E_{\mu_0}[\ell(\norm{t-\alpha(\w)})].\label{eq: ex ante worst-case receiver simplified}
\end{align}
It is sometimes convenient to rewrite $U_{\ex}^R(\pi, \alpha)$ in the following form
\begin{align}
    \max_{G} \sum_{\w} (1-\pup) \mu_0(C(\w))\cdot \E_{C(\w)}[\ell(\norm{t-\alpha(\w)})] + \pup G(\w) \cdot \E_{\mu_0}[\ell(\norm{t-\alpha(\w)})],\label{eq: ex ante worst-case receiver similar to interim}
\end{align}
where the sender's strategy has been re-written by means of the according cells.
\\

The involved worst-case distributions are given as follows. Firstly, under equilibrium considerations the worst-case invokes the maximal error probability $p$, i.e., making the occurrence of an error most likely since that restricts the communication of the sender the most. Secondly, a worst-case error distribution induces a painful trade-off for the receiver between following the designated action $\alpha(\w)$ and sticking to the prior belief $\mu_0$. It chooses a distribution of weights that maximizes the punishment from deviating from $\mu_0$, making the receiver be in favor of actions close to the pooling one while being less separating. For instance, if $\mathcal{G}$ contains all distributions, then a worst-case distribution may assign all weight to that word $\w$, leading to the action that is worst in terms of the prior belief. In the case of a quadratic loss, those actions receive the most weight that induce actions furthest away from the pooling action.\\

We are going to establish existence and uniqueness of the best reply of a receiver to the ex-ante problem together with the interim one in Proposition \ref{Proposition: Existence and uniqueness of best replies}.

\paragraph{Interim best reply}
\noindent%
At the interim stage, the agents have received their private information. The sender will know the state $t$ while the receiver has observed a message $\w$. It is helpful to think of the interim stage as one with more players --- one player for each piece of private information. Everyone of them will play on their own to minimize their loss. In our case agents act on basis of their information by sending a word $\v$ resp.\ picking an action $s$ that minimizes their worst-case expected loss given the information at hand. Explicitly, the interim worst-case expected loss for the sender who observed the state $t$ and facing receivers who will respond with a vector $\{\alpha(\w)\}_{\w}$ of actions is given by
\begin{align}
    U_{\interim, t }^S(\v, \alpha) = \max_{p,G} \ (1-p) \cdot \ell(\norm{t - \alpha(\v)}) + p \cdot \sum_{\w} G(\w) \cdot \ell(\norm{t-\alpha(\w)}).\label{eq: interim worst-case sender}
\end{align}
The expression is minimized as in the ex-ante case by picking 
\begin{equation}
    \v \in \argmin_{\w} \norm{t-\alpha(\w)}. \label{eq: interim sender optimization}
\end{equation}

The receiver who observes the message $\w \in \W$ updates their belief about the state of the world. As this update depends on the configuration of the unknown error channel, we assume that the agent performs a full Bayesian update, i.e., they update their belief for every possible error profile separately and thus face a set of posterior beliefs $\{ \mu^{\pi,p,G}_{\w} \}_{p,G}$ over the state space $T$. Formally, the posterior belief after observing $\w \in \supp(\pi)$ is given by its density function $f^{\pi,p,G}_{\w}$, c.f., expression \eqref{eq: updated posterior}. The interim worst-case expected loss of the receiver after observing $\w \in \supp(\pi)$ can be written as a convex combination of the payoff given by the cell $C(\w)$ and the whole state space $T$:
\begin{align}
    U_{\interim,\w}^R (\pi, s) &= \max_{p,G} \ \E_{\mu^{\pi,g,G}_{\w}}[\ell(\norm{t-s})] \nonumber\\
    &= \max_{\kappa \in [\underline{\kappa},\overline{\kappa}]} \ (1-\kappa) \cdot \E_{\mu_0, C(\w)}[\ell(\norm{t-s})] + \kappa \cdot \E_{\mu_0}[\ell(\norm{t-s})]\label{eq: interim worst-case receiver simplified}
\end{align}
with $\kappa\colon =\kappa^{\pi, p,G}_{\w} \colon = \frac{p \cdot G(\w)}{(1-p) \mu_0(C(\w)) + pG(\w)}$. Note that $\kappa$ is increasing in $p$ and $G(\w)$, thus $\kappa$ runs in the compact interval from $\underline{\kappa}$ for $p=\pdown$ and $\underline{G}(\w) \colon = \min_G G(\w)$ to $\overline{\kappa}$ for $p=\pup$ and $\overline{G}(\w) \colon = \max_G G(\w)$.\bigskip

Since the parameter $\kappa$ controls the weights of the convex combination in the interim problem, one can restrict the worst-case set to $\{ \underline{\kappa}, \overline{\kappa} \}$. Despite this simplification, which value $\kappa$ attains in the worst-case is still a delicate matter. There are three possible options, depending on whether $\E_{\mu_0}[\ell(\norm{t-s})]$ is larger than, equal to or smaller than $\E_{\mu_0,C(\w)}[\ell(\norm{t-s})]$. Indeed, which action ultimately will be chosen by an ambiguity averse receiver has to do with the structure of the cells themselves, as we will analyze later in Section \ref{Section: Inertia inconsistency}.

\paragraph{Existence and Uniqueness of Best Replies}
\noindent%
In equilibrium, the sender's optimal strategy is to always send any word leading to the action closest to their type. This strategy is independent of uncertainty and $\mu_0$-almost surely unique if the receiver's actions are pairwise distinct and the Euclidean is used, c.f., \cite{V}. At any rate, ex-ante and interim optimal sender communication agree.\\
The ex-ante and interim problem of the receiver, given by \eqref{eq: ex ante worst-case receiver simplified} and \eqref{eq: interim worst-case receiver simplified}, also admit unique solutions, as summarized by the following Proposition and its discussion.

\begin{proposition}\label{Proposition: Existence and uniqueness of best replies}
    The receiver's ex-ante and interim optimal replies exist. The resp.\ best reply to $\w$ is unique if $\mu_0(C(\w)) >0$. If $\mu_0(C(\w)) = 0$, $\apool$ is a best reply.

\begin{proof}
        All proofs and calculations are delegated to the appendix.
    \end{proof}
\end{proposition}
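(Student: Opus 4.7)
The plan is to base everything on strict convexity of $s \mapsto \ell(\norm{t-s})$, which follows from the hypothesis that $\ell$ is strictly convex (and, without loss, nondecreasing) together with strict convexity of the norm (equivalently, of $T$). From this I deduce strict convexity and continuity of the two building blocks $s \mapsto \E_{\mu_0,C(\w)}[\ell(\norm{t-s})]$ (whenever $\mu_0(C(\w))>0$) and $s \mapsto \E_{\mu_0}[\ell(\norm{t-s})]$, since integration of a strictly convex continuous integrand against a measure of positive mass preserves both properties. Existence of the best replies is then immediate: both $U_{\ex}^R(\pi,\cdot)$ and $U_{\interim,\w}^R(\pi,\cdot)$ are maxima of continuous integrands over the compact sets $\P\times\G$ (respectively $[\underline{\kappa},\overline{\kappa}]$), hence continuous, and they attain their minima over the compact target spaces $T^\W$ and $T$.

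For interim uniqueness when $\mu_0(C(\w))>0$, the maximum in \eqref{eq: interim worst-case receiver simplified} is taken over a function affine in $\kappa$, so only the two extreme values $\underline{\kappa},\overline{\kappa}$ can be worst cases. The two associated convex combinations are strictly convex in $s$ (as nontrivial mixtures of strictly convex functions), and the pointwise maximum of strictly convex functions is strictly convex, yielding a unique minimizer.

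For ex-ante uniqueness I argue by contradiction: suppose two best replies $\alpha,\alpha'$ satisfy $\alpha(\w_0)\neq\alpha'(\w_0)$ for some $\w_0$ with $\mu_0(C(\w_0))>0$. Set $\alpha_{1/2}=\half(\alpha+\alpha')$ and let $G^*$ attain the $\max$ in $U_{\ex}^R(\pi,\alpha_{1/2})$ as written in \eqref{eq: ex ante worst-case receiver similar to interim}. The first (separable) sum picks up a strict inequality at the $\w_0$-coordinate by strict convexity of $s \mapsto \E_{\mu_0,C(\w_0)}[\ell(\norm{t-s})]$ and the positive weight $(1-\pup)\mu_0(C(\w_0))$, while the $G^*$-weighted sum is at most weakly convex; moreover, $G^*$ need not be the worst case for $\alpha$ or $\alpha'$. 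This yields $U_{\ex}^R(\pi,\alpha_{1/2})<\half[U_{\ex}^R(\pi,\alpha)+U_{\ex}^R(\pi,\alpha')]$, contradicting optimality. For $\w$ with $\mu_0(C(\w))=0$ the cell-expectation term vanishes by convention, leaving only $\pup G(\w)\E_{\mu_0}[\ell(\norm{t-\alpha(\w)})]$; this is either constant in $\alpha(\w)$ (if every $G\in\G$ assigns zero mass to $\w$) or uniquely minimized at $\apool$, so in either case $\apool$ is a best reply.

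The hard part is the building-block strict convexity step. Its delicate subcase occurs when $\norm{t-s_1}=\norm{t-s_2}$: strict convexity of the norm then ensures that $\norm{t-(\lambda s_1+(1-\lambda)s_2)}$ strictly drops at interior $\lambda\in(0,1)$, and monotonicity of $\ell$ transfers this to the desired strict inequality. Everything else is routine bookkeeping that lifts pointwise strictness through integration, pointwise maxima, and finite sums.
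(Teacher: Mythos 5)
Your proposal is correct and follows essentially the same route as the paper: existence via continuity of the worst-case functionals (a supremum of a continuous function over the compact sets $\P\times\G$, resp.\ $[\underline{\kappa},\overline{\kappa}]$) combined with compactness of $T$, uniqueness via strict convexity of $s\mapsto\E_{\mu_0,C}[\ell(\norm{t-s})]$ (driven by strict convexity of $T$ and of $\ell$) and the fact that a pointwise maximum of strictly convex functions is strictly convex, and the convention that null cells are answered with $\apool$. Your midpoint contradiction is just the standard proof of that last convexity fact written out, so the two arguments coincide in substance.
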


Recall that we have restricted our attention to communication devices that entail empty zero-measure cells, following the maxim of quantity. Any such message $\w$ that is never sent (i.e., $C(\w) = \emptyset$) should be replied to with $\apool$: On the one hand side, if $\w$ is expected to be received by (a worst-case) error, $\apool$ is the unique best reply. If, on the other hand side, $\w$ is not expected to be received under (a worst-case) error either, the receiver's response does not alter the (worst-case) expected payoff and is thus free (in an environment of $\apool$). However, if the receiver does actually face $\w$, 
they do not know how to re-assess the situation and thus stick to the prior belief. In that sense, replying with the pooling action $\apool$ is \emph{sequentially rational}. Proposition \ref{Proposition: Existence and uniqueness of best replies} hence suggests the following notion of unique best replies which we will study throughout the rest of our analysis.

\begin{definition}\label{Definition: ex ante and worst-case reply uniqueness}
    We let $\alpha_{ex}$ resp.\ $\alpha_{\interim}$ denote the unique best ex-ante resp.\ interim reply to $\pi$ that assigns $\alpha_{ex}(\w) = \alpha_{\interim}(\w) = \apool$ whenever $\mu_0(C(\w)) = 0$.
\end{definition}

\subsection{Existence of non-babbling Equilibria}
\noindent%
Having characterized the best replies of the agents from an ex-ante and interim perspective, we employ the usual concept of equilibrium by means of mutual best replies.

\begin{definition}
    A strategy profile $(\pi,\alpha)$ is an \emph{ex-ante (worst-case) equilibrium}, if it minimizes both ex-ante worst-case expected losses, i.e.,
    \begin{align*}
        U_{\ex}^S (\pi,\alpha) &\leq U_{\ex}^S (\pi',\alpha), \qquad \text{for all } \pi',\\
        U_{\ex}^R (\pi,\alpha) &\leq U_{\ex}^R (\pi,\alpha'), \qquad \text{for all } \alpha'.
    \end{align*}
    A strategy profile $(\pi,\alpha)$ is called \emph{ex-ante (Pareto-)efficient}, if there is no $(\pi',\alpha')$ with higher ex-ante worst-case expected loss for both agents, at least one being strict.
\end{definition}

\begin{definition}
    A strategy profile $(\pi,\alpha)$ is an \emph{interim (worst-case) equilibrium}, if it minimizes both interim worst-case expected losses \eqref{eq: interim worst-case sender} and \eqref{eq: interim worst-case receiver simplified}, i.e.,
    \begin{align*}
        U_{\interim, t}^S (\pi(t),\alpha) &\leq U_{\interim, t}^S (\v,\alpha), \qquad \text{for all } t \in T, \v \in \W,\\
        U_{\interim, \w}^R (\pi,\alpha(\w)) &\leq U_{\interim, \w}^R (\pi,s), \qquad \text{for all } \w \in \W, s \in T.
    \end{align*}
\end{definition}

Note that in cheap talk games, equilibria without proper communication always exist. These \emph{babbling equilibria} are characterized by a sender who will send their messages independently of their type (i.e., $\pi(t)$ is a fixed (mixed) strategy) and receiver responds optimally by playing the pooling action independent of the observed message. In all these equilibria, the expected loss will be $L_0 = \E_{\mu_0}[\ell(\norm{t-\apool})]$, irrespective of the error channel. In order for communication under an uncertain error channel to be \emph{meaningful}, we aspire at the existence of non-babbling equilibria. The following proposition asserts that the Pareto-efficient equilibrium exists and is non-babbling.

\begin{proposition}\label{Proposition: Existence of efficient ex-ante wc equilibrium}
    A Pareto-efficient ex-ante worst-case equilibrium exists and is non-babbling.
\end{proposition}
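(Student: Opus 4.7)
The plan is to exploit the common-interest structure: $U_{\ex}^S(\pi,\alpha) = U_{\ex}^R(\pi,\alpha) =: U(\pi,\alpha)$, so any joint minimizer of $U$ is a mutual best reply, hence an ex-ante worst-case equilibrium, and is automatically Pareto-efficient. Substituting the sender's Voronoi best reply \eqref{eq: ex-ante sender optimization} into \eqref{eq: ex ante worst-case receiver simplified} yields the reduced objective
\[
    \Phi(\alpha) = (1-\pup)\,\E_{\mu_0}\!\Bigl[\min_{\w} \ell(\norm{t-\alpha(\w)})\Bigr] + \pup \max_{G \in \G} \sum_{\w} G(\w)\, \E_{\mu_0}[\ell(\norm{t-\alpha(\w)})],
\]
which depends only on $\alpha \in T^{\W}$. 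The domain $T^{\W}$ is compact since $T$ is compact and $\W$ is finite. The first term is continuous in $\alpha$ by joint continuity of $(t,\alpha)\mapsto \min_{\w}\ell(\norm{t-\alpha(\w)})$ together with dominated convergence; the second is continuous by Berge's maximum theorem on the compact set $\G$. Hence $\Phi$ attains its minimum at some $\alpha^*$, and $(\pi^*,\alpha^*)$ with $\pi^*$ any Voronoi best reply to $\alpha^*$ is a Pareto-efficient ex-ante worst-case equilibrium.

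It remains to show this equilibrium is non-babbling, i.e., $\Phi(\alpha^*) < L_0 := \E_{\mu_0}[\ell(\norm{t-\apool})]$, since every babbling profile has common loss $L_0$. I would exhibit such an $\alpha$ via a perturbation argument. Choose two distinct words $\w_1, \w_2 \in \W$ and a partition $T = C_1 \sqcup C_2$ of positive $\mu_0$-mass with at least one Bayesian estimator $\alpha_{C_i}\neq \apool$ --- for instance, take $C_1$ a small ball around some $t_0 \neq \apool$, which forces $\alpha_{C_1}$ to lie near $t_0$. For small $\lambda > 0$ set
\[
    \alpha_\lambda(\w_i) = \apool + \lambda(\alpha_{C_i}-\apool),\ i=1,2, \qquad \alpha_\lambda(\w) = \apool \text{ otherwise}.
\]
Upper-bounding the transmission term by evaluating on the preset partition $\{C_i\}$ rather than its Voronoi refinement and applying convexity of $\ell$ to the triangle inequality gives
\[
    (1-\pup)\E_{\mu_0}[\min_{\w}\ell(\norm{t-\alpha_\lambda(\w)})] \leq (1-\pup)\bigl(L_0 - \lambda(L_0 - L_1)\bigr),
\]
with $L_1 := \sum_i \mu_0(C_i)\,\E_{\mu_0,C_i}[\ell(\norm{t-\alpha_{C_i}})] < L_0$ strict by the minimizing property of $\alpha_{C_i}$. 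For the error term, since $\apool$ is the unique minimizer of $s\mapsto\E_{\mu_0}[\ell(\norm{t-s})]$, the first-order derivative of $\lambda\mapsto\E_{\mu_0}[\ell(\norm{t-\apool-\lambda v})]$ at $\lambda=0$ vanishes; assuming sufficient regularity of $\ell$, one obtains $\E_{\mu_0}[\ell(\norm{t-\alpha_\lambda(\w_i)})] \leq L_0 + O(\lambda^2)$, whence the error term is at most $\pup(L_0 + O(\lambda^2))$ uniformly in $G$. For $\lambda>0$ small enough, the linear decrease $\lambda(1-\pup)(L_0-L_1) > 0$ dominates the quadratic increase, so $\Phi(\alpha_\lambda) < L_0$ and therefore $\Phi(\alpha^*) \leq \Phi(\alpha_\lambda) < L_0$.

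The main obstacle is the Taylor-order asymmetry between the two terms: the transmission term decreases linearly in $\lambda$ while the error term grows only quadratically, and this rests on the first-order condition at the pooling action. Justifying the $O(\lambda^2)$ estimate rigorously requires differentiating under the integral sign, which is legitimate by compactness of $T$ together with the smoothness of $\ell$. A secondary subtlety is arranging the partition so that some $\alpha_{C_i}\neq \apool$; this is guaranteed by the positive density of $\mu_0$ and the non-triviality of $T$ by choosing $C_1$ a small ball far from the pooling action.
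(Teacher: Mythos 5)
Your existence argument is the same as the paper's: internalize the sender's Voronoi best reply, get continuity of the reduced objective from dominated convergence plus the compact-supremum lemma, and minimize over the compact set $T^{\W}$. For non-babbling, both you and the paper build a two-cell partition with one cell bounded away from $\apool$ and deviate from the all-pooling interpretation on a single word (the paper) or two words (you), but the mechanics differ. The paper does not perturb: it restricts to the one-parameter family $\alpha_s$ that plays $s$ on $\v_1$ and $\apool$ elsewhere, observes that the worst case then puts weight $\overline{G}(\v_1)$ on $\v_1$, and identifies the \emph{exact} minimizer $s^*$ of the resulting strictly convex function as a nontrivial convex combination of $\alpha_{C_1}$ and $\apool$; since $s^*\neq\apool$ and the minimizer is unique, the value at $s^*$ is strictly below the value at $\apool$, which equals $L_0$. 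Your route instead compares orders in $\lambda$: a linear gain in the transmission term against an $O(\lambda^2)$ loss in the error term.

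The one genuine soft spot is exactly the step you flagged: the $O(\lambda^2)$ bound requires the map $s\mapsto\E_{\mu_0}[\ell(\norm{t-s})]$ to be differentiable at $\apool$, and the proposition is stated for $\ell$ merely strictly convex and increasing. If that map had a kink at its minimizer, its one-sided directional derivative in the direction $\alpha_{C_i}-\apool$ could be strictly positive, the error term would grow linearly, and your order-of-magnitude comparison would collapse into a comparison of two linear rates that need not go your way. This is repairable --- since $\mu_0$ has a density and the non-differentiability sets of $\ell\circ\norm{\cdot - s}$ are $\mu_0$-null for each fixed $s$, the integrated loss is in fact differentiable, so the first-order condition at $\apool$ does hold --- but you assumed it rather than proved it. The paper's exact-minimizer argument sidesteps the issue entirely (it needs only strict convexity and uniqueness of minimizers), which is the cleaner route at this level of generality; note, though, that its closed-form expression for $s^*$ is itself the quadratic-loss formula, so if you work under quadratic loss your Taylor estimate is immediate and the two proofs are essentially equivalent.
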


Unfortunately, we were not able to find an interesting notion of efficiency for the interim stage for which we could give an existence proof. Part of the reason for that is that, although we generally face a setting of common interest, the receiver's incentives at the interim stage differ from their ex-ante ones. All we can say for certain is that interim equilibria exists because of babbling equilibria.

\section{Dynamic Inconsistency}\label{Section: Inertia inconsistency}
\noindent%
When analyzing extensive form games under Knightian uncertainty, actions that were deemed optimal before one's turn may not stay so after obtaining one's information. Any such deviation from ones ex-ante optimal plan is called \emph{dynamic inconsistency}, c.f., \cite{pahlke2022dynamic}, \cite{epstein2003recursive}, \cite{aryal2014note}. As already established, the sender does not behave in a dynamically consistent way.\\
In the following, we investigate whether the receiver exerts dynamically inconsistent behavior under an uncertain noise. Indeed, we find that the receiver's ex-ante optimal plan does generically not stay optimal after receiving a message. Even more, interim actions will be more pooling than the ex-ante optimal play, revealing that the received information will be less convincing in moving away from the pooling action. To this end, fix a communication device $\pi$ throughout this whole section.\\

Recall that the agents know that the true noise channel follows the structure of expression \eqref{eq:noise channel}, but are uncertain about the exact values of $p$ and $G$. They only believe that $(p,G)$ lies in a set $\P \times \G$ without any further knowledge about the parameters. Both agents are ambiguity averse in the sense of \cite{gilboa1989maxmin}, thus they try to minimize their worst expected loss.

\subsection{Quadratic Loss}
\noindent%
Let's turn again to the case $\mathcal{P} = [\pdown, \pup] \subseteq (0,1)$. We furthermore assume a quadratic loss, i.e., $\ell \circ \norm{.} = \norm{.}_2^2$. In that case, we have that for any cell $C$ of positive $\mu_0$-mass
\begin{equation}
    \E_{\mu_0, C}[\ell(\norm{t-s})] = \E_{\mu_0, C}[\norm{t-s}_2] = \E_{\mu_0,C}[\norm{t}_2^2] - \norm{\E_{\mu_0,C}[t]}_2^2 + \norm{\E_{\mu_0,C}[t] - s}_2^2.
\end{equation}
Consequently, the Bayesian estimator is the center of gravity $\alpha_C = \E_{\mu_0, C}[t]$ and the above function is determined solely by the distance of $s$ to $\alpha_C$. We are going to identify a loss in confidence revealed through dynamic inconsistency in sender-receiver games under Knightian uncertainty about the error channel. As it turns out, interim actions are going to be closer to the pooling action than ex-ante ones.\\

Our first observation is that the ex-ante and interim optimal actions $\alpha_{\ex}(\w)$ and $\alpha_{\interim}(\w)$ always lie on the line segment between $\apool$ and the center of gravity $\alpha_{C(\w)}$. Not only does this reduce the complexity of finding the best receiver actions to one dimensional problems for each message $\w \in \W$, but also provides a measure for 'how pooling' an action is.

\begin{proposition}\label{Proposition: Optimizer on line segment generalized in proof}
    The best responses $\alpha_{\ex}(\w)$ and $\alpha_{\interim}(\w)$ lie on the line segment between $\apool$ and $\alpha_{C(\w)}$ for all $\w$.
\end{proposition}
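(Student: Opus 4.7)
My plan is to use the quadratic-loss identity recalled in Section~\ref{Section: Inertia inconsistency} to reduce both objectives in the proposition to nonnegatively weighted sums of the two squared distances $\norm{s-\alpha_{C(\w)}}_2^2$ and $\norm{s-\apool}_2^2$, plus an additive constant independent of $s$. Explicitly, $\E_{\mu_0,C}[\ell(\norm{t-s})] = \norm{s-\alpha_C}_2^2 + c_C$ with $c_C$ not depending on $s$, so the interim objective $U^R_{\interim,\w}(\pi,s)$ becomes (up to such a constant) the maximum over $\kappa \in [\underline{\kappa},\overline{\kappa}]$ of $(1-\kappa)\norm{s-\alpha_{C(\w)}}_2^2 + \kappa\norm{s-\apool}_2^2$. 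For the ex-ante problem I would rather work with \eqref{eq: ex ante worst-case receiver similar to interim} directly than invoke a minimax theorem: if the claim failed for some $\w_0$, replacing $\alpha^*(\w_0)$ by its orthogonal projection onto the segment leaves every $\w \neq \w_0$ summand of \eqref{eq: ex ante worst-case receiver similar to interim} unchanged and decreases the $\w_0$-summand pointwise in $G$, hence also after the outer $\max_G$, contradicting optimality of $\alpha^*$.

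The geometric core is then uniform across both problems. Given a candidate $s$ and the affine line $\mathcal{L}$ through $\alpha_{C(\w)}$ and $\apool$, let $s'$ be the orthogonal projection of $s$ onto $\mathcal{L}$. By the Pythagorean identity,
\begin{equation*}
\norm{s-x}_2^2 = \norm{s'-x}_2^2 + \norm{s-s'}_2^2, \qquad x \in \{\alpha_{C(\w)},\apool\},
\end{equation*}
so every nonnegatively weighted combination of these two quantities --- and therefore its maximum over the adversary's weights --- drops by $(a+b)\norm{s-s'}_2^2$ on passing from $s$ to $s'$. Hence any minimizer lies on $\mathcal{L}$. Parameterizing $\mathcal{L}$ by $s(\tau) = (1-\tau)\alpha_{C(\w)} + \tau\apool$ and writing $d \colon = \norm{\apool-\alpha_{C(\w)}}_2$, one has $\norm{s(\tau)-\alpha_{C(\w)}}_2^2 = \tau^2 d^2$ and $\norm{s(\tau)-\apool}_2^2 = (1-\tau)^2 d^2$. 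For $\tau \notin [0,1]$ both distances strictly exceed their values at the nearer endpoint, so the adversary's weighted maximum strictly decreases upon moving to that endpoint. This confines the optimum to $\tau \in [0,1]$, which is the claim.

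The degenerate cases collapse trivially: if $\alpha_{C(\w)} = \apool$ the segment is a single point and there is nothing to show, and if $\mu_0(C(\w)) = 0$ Proposition~\ref{Proposition: Existence and uniqueness of best replies} already assigns $\apool$ as the designated best reply, which lies on the segment by convention. The only mild obstacle I foresee is the coupling of different messages through $\max_G$ in the ex-ante objective, which prevents a naive componentwise optimization; the perturbation argument above bypasses this because replacing a single coordinate $\alpha^*(\w_0)$ by its projection lowers the $\w_0$-summand uniformly in $G$ while leaving all other summands intact.
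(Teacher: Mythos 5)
Your proposal is correct and follows essentially the same route as the paper: reduce both objectives under quadratic loss to nonnegatively weighted combinations of the squared distances to $\alpha_{C(\w)}$ and $\apool$ (plus $s$-independent constants), then observe that projecting any candidate onto the segment weakly decreases both distances, hence every weighted combination and therefore the worst case, uniformly over the adversary's choice and with the other coordinates of $\alpha$ held fixed in the ex-ante problem. Your version merely makes explicit (via the Pythagorean step and the endpoint-clipping for $\tau \notin [0,1]$) what the paper compresses into ``choose the closest point on the segment.''
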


Following the \emph{maxim of quantity ('Be informative.')} and the \emph{maxim of relation ('Be relevant.')} of \cite{grice1975logic}, agents aspire to communicate in a way that maximizes information and relevance of the words in place. For example, in the extreme case that a communication device induces only one action, the sender does neither provide any information nor is it relevant (as only the pooling action is induced). It thus stand to reason, that they try to maximize the information transmitted by means of the spread of the actions induced. Indeed, efficient communication tries to be as 'separating as possible', c.f., \cite{V}. At the same time, however, the noisy channel confounds this effort, necessitating to prevent losses due to misunderstandings. Proposition \ref{Proposition: Optimizer on line segment generalized in proof} provides a way of quantifying this trade-off by means of the distance of $\alpha_{\ex}(\w)$ resp.\ $\alpha_{\interim}(\w)$ to the pooling action $\apool$ for a received word $\w$: If the risk of misunderstandings is moderately low, 
the best (ex-ante or interim) reply is going to be close to $\alpha_{C(\w)}$ --- the action chosen if there were no noise. If errors in communication are likely, the receiver is less confident that the received message was originally sent and reduces potential losses due to misunderstandings by choosing an action close to $\apool$ --- the action chosen if there were no communication.\\

Another notion facilitated by Proposition \ref{Proposition: Optimizer on line segment generalized in proof} is that of \emph{underreaction}, c.f., \cite{epstein2010non}. The weight put on the pooling action or the center of the cells can be interpreted as a trade-off between keeping the prior belief $\mu_0$ and performing the Bayesian update to $\mu_{0,C(\w))}$ in absence of noise. The less weight is put on the updated belief, the less an individual reacts to the arrival of new information.

\subsection{Regular Words}
\noindent%
The main result of our analysis states that the interim optimal actions to a communication device will be closer to the pooling action than the ex-ante ones. In light of the above discussion, this can be interpreted as the receiver, at the time of receiving a message, losing more confidence in the signal than originally anticipated.

This result is driven by the following assumption on a word $\w$ and its cell that helps pin down the worst-case belief at the interim stage. We assume that the interim worst-case will be $\overline{\kappa}$, i.e., putting maximal weight on the prior probability and minimal weight on the conditional probability of the cell itself (without any error). This means that the expected loss conditioned on the cell $C(\w)$ itself is lower than the loss for the whole state space. As it seems natural to assume that the worst-case of a noisy channel is to make it most probable that the received message arrived by mistake, we call such a word \emph{regular}.

\begin{definition}\label{Definition: regular word}
    A message $\w \in \W$ is called \emph{regular} (under $\pi$) ifwe have  $\alpha_{\interim}(\w) = (1-\overline{\kappa}) \alpha_{C(\w)} + \overline{\kappa} \apool$, i.e., if the interim worst case invokes $\overline{\kappa} = \tfrac{\pup \overline{G}(\w)}{(1-\pup) \mu_0(C(\w)) + \pup \overline{G}(\w)}$.
\end{definition}

The following lemma gives a sufficient technical condition for a word to be regular and establishes their existence.
\begin{lem}\label{Lemma: regular words properties}
A word $\w$ is regular under $\pi$, if
    \begin{equation}
        (2\overline{\kappa}-1) \cdot \norm{\alpha_{C(\w)}-\apool}_2^2 \leq \Tr_{\mu_0} - \Tr_{\mu_0,C(\w)},\label{eq: condition regular word}
    \end{equation}
    where we denote by $\Tr_{\mu} \colon = \E_{\mu} \left[ \norm{t - \E_{\mu}[t]}_2^2 \right]$ the trace norm of the variance matrix for any probability measure $\mu$ on $T$. Furthermore, a regular word always exists.
\end{lem}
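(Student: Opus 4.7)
The plan is to exploit the one-dimensional reduction afforded by Proposition \ref{Proposition: Optimizer on line segment generalized in proof}. I would parameterize candidate interim actions as $s_\lambda = (1-\lambda)\alpha_{C(\w)} + \lambda\apool$ for $\lambda \in [0,1]$ and set $d = \norm{\alpha_{C(\w)} - \apool}_2$. Under quadratic loss, the identity $\E_{\mu,A}[\norm{t-s}_2^2] = \Tr_{\mu,A} + \norm{\alpha_A - s}_2^2$ turns each extremal function
\[
f_\kappa(s_\lambda) = (1-\kappa)\E_{\mu_0,C(\w)}[\ell(\norm{t-s_\lambda})] + \kappa\, \E_{\mu_0}[\ell(\norm{t-s_\lambda})], \qquad \kappa \in \{\underline{\kappa},\overline{\kappa}\},
\]
into a convex quadratic in $\lambda$ with unique minimum at $\lambda = \kappa$. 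Since the objective in \eqref{eq: interim worst-case receiver simplified} is affine in $\kappa$, the outer $\max$ over $\kappa \in [\underline{\kappa},\overline{\kappa}]$ collapses to $\max\{f_{\underline{\kappa}}(s_\lambda), f_{\overline{\kappa}}(s_\lambda)\}$.

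For sufficiency, the key calculation is $f_{\overline{\kappa}}(s_{\overline{\kappa}}) - f_{\underline{\kappa}}(s_{\overline{\kappa}})$; after a short algebraic manipulation this factors cleanly as $(\overline{\kappa}-\underline{\kappa})\cdot\bigl[(\Tr_{\mu_0}-\Tr_{\mu_0,C(\w)}) - (2\overline{\kappa}-1)\,d^2\bigr]$. Hence hypothesis \eqref{eq: condition regular word} is exactly the certificate that the worst-case $\kappa$ at $s_{\overline{\kappa}}$ is $\overline{\kappa}$. Because $s_{\overline{\kappa}}$ is the global minimum of the inner quadratic $f_{\overline{\kappa}}$, for any $s_\lambda$ one obtains $\max\{f_{\underline{\kappa}}(s_\lambda), f_{\overline{\kappa}}(s_\lambda)\} \geq f_{\overline{\kappa}}(s_\lambda) \geq f_{\overline{\kappa}}(s_{\overline{\kappa}})$, which pins down $s_{\overline{\kappa}} = \alpha_{\interim}(\w)$ and hence regularity of $\w$.

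For existence, I would invoke the law of total variance on the $\pi$-induced partition $\{C(\w)\}_{\w \in \supp(\pi)}$:
\[
\Tr_{\mu_0} - \sum_\w \mu_0(C(\w))\,\Tr_{\mu_0,C(\w)} = \sum_\w \mu_0(C(\w))\,\norm{\alpha_{C(\w)} - \apool}_2^2.
\]
Using $\overline{\kappa}_\w \leq 1$ (hence $2\overline{\kappa}_\w - 1 \leq 1$), the $\mu_0(C(\w))$-weighted sum of the slack $(\Tr_{\mu_0}-\Tr_{\mu_0,C(\w)}) - (2\overline{\kappa}_\w-1)\norm{\alpha_{C(\w)}-\apool}_2^2$ simplifies to $2\sum_\w \mu_0(C(\w))(1-\overline{\kappa}_\w)\norm{\alpha_{C(\w)}-\apool}_2^2 \geq 0$. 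As the weights sum to one and $\supp(\pi)$ is non-empty, at least one word must carry non-negative slack and therefore satisfies \eqref{eq: condition regular word}.

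The main obstacle is the existence claim: condition \eqref{eq: condition regular word} depends on $\w$ both through the geometry of $C(\w)$ and through $\overline{\kappa}_\w$, with no obvious monotonicity across messages. The averaging trick above dissolves this difficulty by matching the cell-weighted mean of $\Tr_{\mu_0}-\Tr_{\mu_0,C(\w)}$ to the between-cell dispersion $\sum_\w \mu_0(C(\w))\norm{\alpha_{C(\w)}-\apool}_2^2$, against which the universal bound $\overline{\kappa}_\w \leq 1$ furnishes exactly the slack needed in at least one cell.
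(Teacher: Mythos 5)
Your proof is correct and follows essentially the same route as the paper: the sufficiency step is the paper's chain of inequalities (checking that at $\beta_{\overline{\kappa}}$ the whole-space expectation dominates the cell expectation, which is exactly condition \eqref{eq: condition regular word}, and then using that $\beta_{\overline{\kappa}}$ globally minimizes $f_{\overline{\kappa}}$), and your existence argument is the paper's law-of-total-variance averaging, merely phrased as a direct nonnegative-average computation rather than a proof by contradiction. Your explicit saddle-point chain $\max\{f_{\underline{\kappa}}(s),f_{\overline{\kappa}}(s)\}\geq f_{\overline{\kappa}}(s)\geq f_{\overline{\kappa}}(s_{\overline{\kappa}})$ is a slightly cleaner write-up of the step the paper leaves implicit.
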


The trace norm of the variance matrix is the sum of the variances of all one-dimensional projections and provides a scalar-valued measure of the total variance. Lemma \ref{Lemma: regular words properties} states that a word is regular, whenever the total variance of its cell is small enough and, in addition if $\overline{\kappa} > \tfrac{1}{2}$, the corresponding center of gravity is not too far away from the pooling action.

\subsection{Interim actions are more pooling than ex-ante ones}
\noindent%
It is within the nature of extensive form games to distinguish between the ex-ante and the interim stage for an agent's decision making, \cite{fudenberg1991game}. Having studied the nature of the ex-ante and interim replies, we now want to compare the two to one another. In contrast to a classical Bayesian setting with fully supported marginals, ex-ante and interim optimal actions do not necessarily coincide in a setting of Knightian uncertainty with full Bayesian updates. Indeed, we are now going to unravel generic dynamically inconsistent behavior in our cooperative communication game. Our main theorem identifies interim actions to be closer to the pooling action, and thus less separating than the ex-ante ones.\\

In order to provide an exhaustive analysis, we first discuss the particular limit cases when there is no error ($p=0$) or noise only ($p=1$).\\
If $p=0$, there is no error and uncertainty about the error does not have a bite in the communication game. The receiver plays the Bayesian estimators of each cell in response to a received message, as in the benchmark case of Voronoi languages, \cite{V}. Facing fully supported marginals, ex-ante and interim optimal action coincide and no dynamic inconsistency arises.\\
If $p=1$, the error channel $\epsp{.}{\v}{p,G}$ is independent of the sent word $\v$. The sender is thus indifferent between any communication device as they cannot influence the transmission in any way. In turn, the receiver cannot infer anything out of any received signal. Sticking to their prior belief and taking the pooling action is thus optimal. Consequently, ex-ante and pooling action of the receiver coincide and are again equal for every word and in this case always coincide with the pooling action.\\

We now return to our original setting where $[\pdown,\pup] \subseteq (0,1)$ and state the main result of this article. We show that the receiver behaves dynamically inconsistent if facing an uncertain noise. The inconsistency in their behavior exhibits a remarkable pattern. After receiving information, the receiver changes their worst-case belief to one that puts more weight on their prior belief. As a consequence, the receiver values the new information less and underreacts to the signal. Their optimal interim action will be strictly closer to the pooling action in comparison to their ex-ante optimal plan. Our interpretation of this is that at an interim stage information is considered less valuable or convincing then from an ex-ante point of view.

\begin{thm}\label{Theorem: interim strictly more pooling than ex ante}
    The receiver's interim best replies are more pooling than the ex-ante ones, i.e., $\norm{\alpha_{\interim}(\w) - \apool}_2 \leq \norm{\alpha_{\ex}(\w)- \apool}_2$ for all regular $\w \in \W$.\\
    If $\pi$ has a full vocabulary, all words are regular, none is the pooling action and $\#\mathcal{G} \geq 2$, then the above inequality is strict for all $\w$.
    
\end{thm}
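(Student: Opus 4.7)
The plan is to use Proposition \ref{Proposition: Optimizer on line segment generalized in proof} to collapse both optimization problems into one-dimensional quadratic minimizations along the segment from $\alpha_{C(\w)}$ to $\apool$, and then to compare the resulting minimizers. Concretely, I parametrize any admissible action as $\alpha(\w) = (1-\lambda_{\w})\alpha_{C(\w)} + \lambda_{\w}\apool$ with $\lambda_{\w} \in [0,1]$, so that ``more pooling'' corresponds to larger $\lambda_{\w}$. Writing $d_{\w} := \norm{\alpha_{C(\w)} - \apool}_2$ and using the quadratic-loss identity $\E_{\mu, C}[\norm{t-s}_2^2] = \Tr_{\mu, C} + \norm{\alpha_C - s}_2^2$, each summand of the objectives \eqref{eq: ex ante worst-case receiver similar to interim} and \eqref{eq: interim worst-case receiver simplified} becomes an explicit quadratic in $\lambda_{\w}$ once the error parameters are fixed.

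For the interim problem, regularity (Definition \ref{Definition: regular word}) pins down the worst-case weight at $\overline{\kappa}(\w)$, and a direct first-order condition in $\lambda_{\w}$ yields
\begin{align*}
\lambda_{\interim}(\w) = \overline{\kappa}(\w) = \frac{\pup \overline{G}(\w)}{(1-\pup)\mu_0(C(\w)) + \pup \overline{G}(\w)}.
\end{align*}
For the ex-ante problem, the objective is convex in $\alpha$ and linear in $G$ over the compact set $\mathcal{G}$, so a minmax/saddle-point argument (applied to the closed convex hull of $\mathcal{G}$, which preserves the value) produces a single worst-case distribution $G^{\ex}$ against which $\alpha_{\ex}$ is the best reply. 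The ex-ante objective decouples across $\w$, and the per-coordinate first-order condition gives
\begin{align*}
\lambda_{\ex}(\w) = \frac{\pup G^{\ex}(\w)}{(1-\pup)\mu_0(C(\w)) + \pup G^{\ex}(\w)},
\end{align*}
the identical expression with $G^{\ex}(\w)$ replacing $\overline{G}(\w)$. Because $G^{\ex}(\w) \leq \overline{G}(\w)$ by definition of $\overline{G}$ and the map $x \mapsto \pup x / ((1-\pup)\mu_0(C(\w)) + \pup x)$ is strictly increasing in $x \geq 0$, the weak inequality $\lambda_{\ex}(\w) \leq \lambda_{\interim}(\w)$ follows for every regular $\w$, which proves the first part of the theorem.

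For the strict version, the assumption of full vocabulary makes $\mu_0(C(\w)) > 0$ and the absence of pooling centers makes $d_{\w} > 0$, so both line segments are non-degenerate and the pooling distance depends strictly monotonically on $\lambda_{\w}$. With $\#\mathcal{G} \geq 2$, the pointwise supremum $\overline{G}$ cannot itself be a probability measure on $\W$: otherwise it would dominate every $G \in \mathcal{G}$ and force $\mathcal{G} = \{\overline{G}\}$, contradicting the cardinality assumption. Hence $\sum_{\w} \overline{G}(\w) > 1 = \sum_{\w} G^{\ex}(\w)$, so $\overline{G}$ strictly exceeds $G^{\ex}$ in the aggregate. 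The main technical obstacle is to promote this aggregate slack to the coordinate-wise strict inequality $G^{\ex}(\w) < \overline{G}(\w)$ at every $\w$ simultaneously; this is where regularity of every word enters, ensuring that the formulas derived above actually govern the optimizers at each coordinate, and must be combined with the saddle-point balancing condition on $G^{\ex}$ (which, by $\#\mathcal{G} \geq 2$, prevents $G^{\ex}$ from saturating the pointwise supremum in any coordinate).
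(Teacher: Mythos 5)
Your reduction of the weak inequality is correct and follows a genuinely different route from the paper's. Where you invoke a minimax/saddle-point argument on the convex hull of $\mathcal{G}$ to extract a single worst-case $G^{\ex}$ and then read off $\lambda_{\ex}(\w)$ from the same first-order condition that gives $\lambda_{\interim}(\w)$ (with $G^{\ex}(\w)\leq \overline{G}(\w)$ doing the work), the paper avoids any minimax theorem: it fixes a candidate weight closer to $\alpha_{C(\w)}$ than the interim one and shows, via a uniform lower bound on the derivative of the per-word auxiliary function that holds for \emph{every} $G$ simultaneously, that such a weight is strictly dominated in the worst case. Your route is slicker and yields a sharper structural characterization of $\alpha_{\ex}$, but it owes the reader the Sion-type argument (continuity, convexity in $\alpha$, affinity in $G$, compactness, and the identification of the saddle component with the unique minimizer from Proposition \ref{Proposition: Existence and uniqueness of best replies}); the paper's is more elementary and self-contained. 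Either way the weak statement is established.

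The strict part, however, has a genuine gap, and you have located it yourself: you need $G^{\ex}(\w)<\overline{G}(\w)$ at \emph{every} coordinate, you prove only the aggregate slack $\sum_{\w}\overline{G}(\w)>1=\sum_{\w}G^{\ex}(\w)$, and the promotion to coordinate-wise strictness is precisely the step you leave as "the main technical obstacle." The parenthetical claim that $\#\mathcal{G}\geq 2$ together with a "balancing condition" prevents $G^{\ex}$ from saturating the pointwise supremum in any coordinate is asserted, not proved — and it is not true in general. Take $\W=\{L,R\}$, $\mathcal{G}$ the two point masses, and an asymmetric split of a one-dimensional state space so that $\big((1-\lambda_{\interim}(L))d_L\big)^2$ strictly exceeds $d_R^2$; then the point mass on $L$ is a saddle-point distribution, $G^{\ex}(L)=1=\overline{G}(L)$, and by your own formulas $\lambda_{\ex}(L)=\lambda_{\interim}(L)$. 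The aggregate slack only guarantees strictness at \emph{some} word. Note that the paper's own strictness argument is of this weaker "some coordinate" type as well: it shows via a scaling perturbation along the ray $(c\cdot\lambda_{\interim,\w})_{\w}$ that the interim weight vector cannot be the ex-ante optimizer, which forces strict separation at at least one word but not at all of them. So the missing step in your proposal is not a routine detail one could fill in; as stated, the coordinate-wise claim your argument rests on fails, and a proof of "strict for all $\w$" would need an additional hypothesis or a different mechanism.
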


\subsection{Example}
\noindent%
We now give an illustrative example to demonstrate that the receiver's behavior is dynamically inconsistent if they face Knightian uncertainty. Detailed calculations are given in the appendix.

\begin{example}\label{Example: Inconsistency on unit interval}
A company produces a good and seeks information about near-future consumption. Depending on the exact future demand, the optimal production amount lies in the interval $T = [-\tfrac{1}{2},\tfrac{1}{2}]$. The company has a uniform prior $\mu_0$ on $T$ and thus does not change the production if it has no information, i.e., $\apool = 0$. The company faces a quadratic loss of not producing the right amount, i.e., $\ell \circ \norm (.) = \norm{.}_2^2$. The company's customer analytics team (sender) observes the change in demand. However, communication with management (receiver) is limited to broad terms, telling them to either lower ($L$) or raise ($R$) production. Afraid that analytics' recommendation is wrong with a fixed probability $p$ ($\mathcal{P} = \{p\}$) and facing full uncertainty about the suggestion if an error happened ($\mathcal{G} = \Delta(\{L,R\})$), management faces ambiguity about their action.%

Strategizing before the arrival of information, the company finds the following (up to re-labeling of the messages almost surely unique) ex-ante Pareto-efficient equilibrium featuring the communication device
\begin{equation*}
\pi(t) = \begin{cases}
L &, t<0,\\
R &, t \geq 0
\end{cases}
\end{equation*}
and the ex-ante optimal actions $\alpha_{\ex}(R) = \tfrac{1}{4}\cdot (1-p) = -\alpha_{\ex}(L)$. Under $\pi$, condition \eqref{eq: condition regular word} is fulfilled for each word and thus they are both regular. Consequently, the interim optimal actions are given by $\alpha_{\interim}(R) = \tfrac{1}{4} \cdot \tfrac{1-p}{1+p}= - \alpha_{\interim}(L)$. Indeed, since $\alpha_{\interim}$ is symmetric around $0$, ($\pi,\alpha_{\interim})$ forms an interim equilibrium. Figure \ref{Fig: dynamic inconsistency for uncertainty in G} illustrates the equilibria for all fixed values of $p \in [0,1]$.\\
Interim optimal actions are closer to the pooling action $\apool = 0$ than ex-ante ones. For a fixed error probability $p_1$, they are equal to ex-ante optimal actions for a higher error probability $p_2$. Our interpretation is that management has more trust in the signal at the ex-ante stage than at the interim stage, ultimately resulting in a less pronounced response.\bigskip

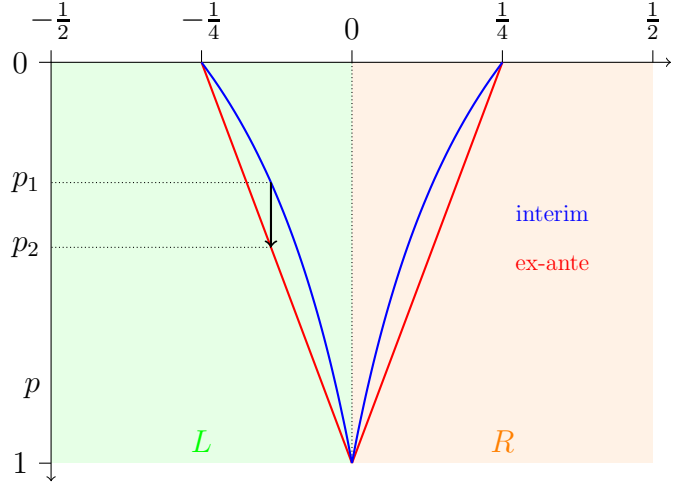
\begin{figure}
    \begin{center}
\begin{tikzpicture}[scale=8,rotate=-90]

\fill[green,opacity=0.1] (0,-1/2)--(0,0)--(2/3,0)--(2/3,-1/2)--cycle;
\fill[orange,opacity=0.1] (0,1/2)--(0,0)--(2/3,0)--(2/3,1/2)--cycle;
\draw (2/3,-1/4)node[above,green]{$L$};
\draw (2/3,1/4)node[above,orange]{$R$};

\draw[<->] (0,1/2+0.03)--(0,-1/2)--(2/3+0.03,-1/2)node[left, yshift=3em]{$p$};

\draw (0,1/4)--(-0.02,1/4)node[above]{$\frac{1}{4}$};
\draw (0,-1/4)--(-0.02,-1/4)node[above]{$-\frac{1}{4}$};
\draw (-0.02,0)node[above]{$0$}--(0,0);
\draw[densely dotted](0,0)--(2/3,0);
\draw (0,-1/2)--(0,-1/2-0.02)node[left]{$0$};
\draw (2/3,-1/2)--(2/3,-1/2-0.02)node[left]{$1$};

\draw (0,1/2)--(-0.02,1/2)node[above]{$\frac{1}{2}$};
\draw (0,-1/2)--(-0.02,-1/2)node[above]{$-\frac{1}{2}$};

\draw[scale=1, domain=0:1, smooth, variable=\x, red,thick] plot ({ 2/3 * \x}, {1/4 * (1 - \x)});
\draw[scale=1, domain=0:1, smooth, variable=\x, red, thick] plot ({ 2/3 * \x}, {-1/4 * (1 - \x)});
\draw (1/3,1/3)node[red,scale=0.75]{ex-ante};

\draw[scale=1, domain=0:1, smooth, variable=\x, blue,thick] plot ({2/3* \x}, {1/4 * (1 - \x)/(1 + \x)});
\draw[scale=1, domain=0:1, smooth, variable=\x, blue,thick] plot ({ 2/3*\x}, {-1/4 * (1 - \x)/(1 + \x)});
\draw (1/4,1/3)node[blue, scale=0.75]{interim};

\draw[densely dotted] (2/3 * 0.3, -0.1346)--(2/3 * 0.3, -1/2)node[left]{$p_1$};

\draw[thick,->](2/3 * 0.3, -0.1346)--(2/3 * 0.4616, -0.1346);

\draw[densely dotted] (2/3 * 0.4616, -0.1346)--(2/3 * 0.4616, -1/2)node[left]{$p_2$};

\end{tikzpicture}
\end{center}
    \caption{Ex-ante Pareto-efficient and interim equilibria given the communication device, splitting the state space in two (L(eft) and R(ight)) intervals. For every fixed $p \in [0,1]$, the uncertainty set is given by $\mathcal{P} = \{ p \}$ and $\mathcal{G} = \Delta(\{ L,R \})$. Except for $p=0,1$, interim optimal actions are stricly more pooling, i.e., closer to $\apool = 0$ than ex-ante ones, revealing dynamically inconsistent behavior.}
    \label{Fig: dynamic inconsistency for uncertainty in G}
\end{figure}
\end{example}

\section{Conclusion \& Discussion}\label{Section: Inertia Conclusion}
\noindent%
This article studies a communication game of common interest under ambiguous noise. We formalize the information structure as an extensive form game under Knightian uncertainty, letting the agents face sets of beliefs for each of their information sets. While the sender is unperturbed by the presence of uncertainty, the receiver's response is significantly impacted. Being afraid of having received a message by error, they will never fully commit to the message. They rather play an action that lies between what they would if there was no error and the action taken if no information was revealed. Under uncertainty about the precise parameters of the noisy channel, this underreaction can further be classified. Indeed, the receiver will react less than originally expected to the arrival of new information. The reason for this is a change in their worst-case belief when asked to take action. As it turns out the receiver will put more weight on the belief prior to communication, afraid of having underestimated the probability of the message received due to error in the first place. We give an example illustrating that the interim optimal action corresponds to an ex-ante optimal action under a higher level of noise. Hence, we identify a loss of confidence at the moment a message is received. We prove existence of efficient ex-ante equilibria and show that these are not babbling, i.e., they reveal information. Further research may find the right notion of efficient interim equilibria and prove their existence, as well as investigating the setting under a class of noisy channels without an additively separable structure.

\appendix\label{Appendix: Inertia}

\setcounter{cor}{0}
\renewcommand{\thecor}{\Alph{section}\arabic{cor}}
\setcounter{lem}{0}
\renewcommand{\thelem}{\Alph{section}\arabic{lem}}
\setcounter{proposition}{0}
\renewcommand{\theproposition}{\Alph{section}\arabic{proposition}}
\setcounter{definition}{0}
\renewcommand{\thedefinition}{\Alph{section}\arabic{definition}}

\section{Relation between the error channels from \texorpdfstring{\cite{NT}}{Noisy Talk} and \texorpdfstring{\cite{bauch2021effects}}{GB}}\label{Section: Inertia Appendix Error Channel relation}
\noindent%
In the study of noisy communication, the precise form of the used error channel plays a crucial role. \cite{NT} use a noisy channel that is additively separable, drawing an erroneous message from a distribution independent of the error realization and of the message sent.\footnote{In an extension, they relax this assumption by letting the error probability depend on the sent message.} Under a quadratic loss and uniform state space, they investigate strategic aspects and provide a natural generalization to the well-known model by Crawford and Sobel (\cite{CS}). In contrast, \cite{bauch2021effects} investigates a noisy channel that depends on the word sent under a setting of common interest. Words that are close are more easily confounded. In the following, we analyze their relation, characterizing when they agree in a stochastic setting.\\

To this end, let us first analyze the metric-dependent error channel used in \cite{bauch2021effects}. Let $\mathcal{A}$ be a set of $m+1 \colon = \# \mathcal{A }\geq 2$ elements and consider messages of fixed length $n$, summarized by $\W \colon = \mathcal{A}^n$ with generic elements $\v,\w$. The \emph{Hamming-distance} $d(\v,\w)$ between to words is the number of entries in which they differ. We say that a noisy channel is of \emph{Shannon-type} if there is an error probability $q \in [0,1]$ such that the transition probabilities satisfy
\begin{equation}\label{eq: q-are symmetric channel appendix}
    \epsilonS(\w \mid \v) \colon = (1-q)^{n-d(\v,\w)} \cdot \left(\frac{q}{m}\right)^{d(\w,\w')}.
\end{equation}
The closer the messages are in their Hamming-distance, the higher the probability that the error confounds them. In the literature on error correction this channel is called the $\#\mathcal{A}$-ary symmetric channel of length $n$ with error probability $q$, \cite{Roth}.\\%

We now formalize an error generating process in which the error message is drawn independently of the sent word.  We say that a noisy channel $\epsilonNT$ is of the \emph{noisy-talk-type}, if there are an error distribution $G \in \Delta(\W)$ and for each word $\v \in \W$ an error probability $p_{\v} \in [0,1]$ such that
\begin{equation} \label{eq:NT channel}
\epsilonNT(\w \mid\v) = (1-p_{\v}) \cdot \mathds{1}_{\v}(\w) + p_{\v} \cdot G(\w),
\end{equation}
which is the message-dependent extension of the error channel studied by \cite{NT} and adjusted to the special setting of a finite message space. The intuition behind this noise channel is to have for each word a (possibly distinct) probability $p_{\v}$ with which it is affected by an error. If there is an error, the word received is drawn from a joint error probability distribution $G$.\\%

Obviously, if $p_{\v} = q = 0$ for all $\v$, both channels agree and correctly transmit every message. Likewise, if $p_{\v} = 1$ for all $\v$ and $q = \tfrac{m}{m+1}$, the message received is independent of the sent message and thus \emph{uninformative}, c.f., \cite{bauch2021effects}. The following proposition summarizes that the two ways of modeling amount to the same error generating process only in degenerated cases.
\begin{proposition}\label{Proposition: Shannon and noisy talk channel comparison}
Let $0<q<\tfrac{m}{m+1}$. Then the Shannon-type channel $\epsilonS$ is of noisy-talk-type if and only if $n = 1$.
\end{proposition}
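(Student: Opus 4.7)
The plan is to handle the two directions separately; the substantive work lies in ``only if''. For the ``if'' direction, when $n=1$ the Hamming distance is simply the indicator of $\v\neq\w$, so I would guess the noisy-talk-type representation with $G$ uniform, i.e.\ $G(\w) = 1/(m+1)$, and constant error probability $p_{\v} = q(m+1)/m$ for every $\v \in \W$. Plugging this into \eqref{eq:NT channel} and comparing with \eqref{eq: q-are symmetric channel appendix} in the two cases $\w=\v$ and $\w\neq\v$ is a one-line check, and the hypothesis $q < m/(m+1)$ ensures $p_{\v}\in(0,1)$, so the representation is valid.

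For ``only if'', suppose by contradiction that $n\geq 2$ and that $\epsilonS(\cdot\mid\v) = (1-p_\v)\mathds{1}_\v(\cdot) + p_\v G(\cdot)$ for some family $(p_\v)_{\v\in\W}$ and some $G\in\Delta(\W)$. Since $0<q<1$, the right-hand side of \eqref{eq: q-are symmetric channel appendix} is strictly positive for every pair $\w\neq\v$; the $\epsilonNT$-form then forces $p_\v>0$ for all $\v$, and solving $G(\w) = \epsilonS(\w\mid\v)/p_\v$ (for any $\v\neq\w$) yields $G(\w)>0$ for all $\w\in\W$. With positivity in hand, I can freely take ratios.

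The key step exploits the coordinate structure. Fixing distinct letters $a,b\in\mathcal{A}$ and an arbitrary suffix $u\in\mathcal{A}^{n-2}$, define the four words
\begin{equation*}
\w_1 = (a,a,u),\quad \w_2 = (a,b,u),\quad \v = (b,a,u),\quad \v' = (b,b,u),
\end{equation*}
which by direct inspection satisfy $d(\v,\w_1)=1$, $d(\v,\w_2)=2$, $d(\v',\w_1)=2$, $d(\v',\w_2)=1$. Dividing $p_\v G(\w_2) = \epsilonS(\w_2\mid\v)$ by $p_\v G(\w_1) = \epsilonS(\w_1\mid\v)$ cancels $p_\v$ and yields $G(\w_2)/G(\w_1) = q/\bigl(m(1-q)\bigr)$; running the same argument from $\v'$ gives $G(\w_1)/G(\w_2) = q/\bigl(m(1-q)\bigr)$. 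Multiplying the two forces $\bigl(q/(m(1-q))\bigr)^2 = 1$, hence $q = m(1-q)$, i.e.\ $q = m/(m+1)$, contradicting the standing assumption $q<m/(m+1)$.

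The main obstacle is not the algebra but choosing the four-word configuration: what makes it work is that a single ``swap'' in the first two coordinates simultaneously interchanges the distances from $\v$ and $\v'$ to $\w_1$ and $\w_2$, so the two distance-ratio equations become reciprocals of one another. This construction requires exactly $n\geq 2$ (to have two coordinates to swap) and works uniformly for every $m\geq 1$, including the binary case. The role of the strict inequality $q<m/(m+1)$ is confined to the final line of the contradiction, which is consistent with the observation that at $q=m/(m+1)$ the Shannon channel degenerates to the uniform distribution on $\W$ and trivially admits a noisy-talk-type representation.
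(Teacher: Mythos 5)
Your proof is correct, and while the sufficiency direction coincides with the paper's (uniform $G$ and constant $p_{\v}=q(m+1)/m$), your necessity argument takes a genuinely different and more uniform route. The paper first shows that words equidistant from a common word must share the same $G$-value and error probability, then splits into cases: for $\#\mathcal{A}\geq 3$ an equidistant word always exists, forcing $G$ and $(p_{\v})_{\v}$ to be constant and yielding a contradiction by comparing distances $1$ and $2$; for the binary alphabet this fails, and the paper resorts to a separate parity argument, partitioning $\W$ into two equivalence classes under $\v\sim\w \iff 2\mid d(\v,\w)$ and solving a small system for the two class values $g_0,g_1,p_0,p_1$. Your four-word swap configuration $\w_1=(a,a,u)$, $\w_2=(a,b,u)$, $\v=(b,a,u)$, $\v'=(b,b,u)$ sidesteps the case distinction entirely: the two ratio identities $G(\w_2)/G(\w_1)=q/(m(1-q))$ and $G(\w_1)/G(\w_2)=q/(m(1-q))$ are reciprocal by construction, and multiplying them pins $q=m/(m+1)$ regardless of alphabet size, so the binary case needs no special treatment. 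The preliminary positivity of $p_{\v}$ and $G(\w)$ is established the same way in both arguments, and your closing observation that the excluded boundary value $q=m/(m+1)$ is exactly where $\epsilonS$ degenerates to the uniform channel (which trivially is of noisy-talk-type with $p_{\v}\equiv 1$) correctly explains why the hypothesis is sharp. In short: same destination, cleaner road for the "only if" half.
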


\begin{remark}\label{Remark: uniqueness of Shannon noisy talk channel}
If $n=1$, the noisy channel of Shannon-type can uniquely be written as a noisy-talk-type if and only if $\#\mathcal{A} \geq 3$.
\end{remark}

\section{Proofs and Calculations}\label{Appendix: Inertia: Proofs}
\noindent%
The following purely topological lemma is useful for proving existence of the best replies.
\begin{lem} \label{Lemma: topology supremum continuous if Y is compact}
    Let $X,Y$ be two topological spaces, $Y$ compact and $f \colon X \times Y \to \R$ continuous\footnote{$X \times Y$ endowed with the product topology. Note that this is equivalent to the box topology as it is a finite Cartesian product.}. Then, $g(x) \colon = \sup_y f(x,y)$ is continuous.
    \begin{proof}
        The presented proof is based on the WordPress blog by \cite{williewong}.\\
        We show that $g^{-1}((b,\infty))$ and $g^{-1}((-\infty, a))$ are open in $X$ for all $a,b \in \R$. As the topology of $\R$ is generated by these intervals, this proves continuity of $g$.\smallskip

        First consider $g^{-1}((b,\infty))$. Let $\pi_X \colon X \times Y \to X$ denote the projection which is an open map. Then $g^{-1}((b,\infty)) = \pi_X(f^{-1}((b,\infty)))$ and thus it is open.\\
        Indeed: We have $x \in g^{-1}((b,\infty))$ if and only if there exists a $y \in Y$ with $f(x,y) > b$, i.e., $(x,y) \in f^{-1}((b,\infty))$. This is equivalent to $x \in \pi_X(f^{-1}((b,\infty)))$.\bigskip

        Now consider $g^{-1}((-\infty,a))$. We observe the following equivalence for an $x \in X$: $x \in g^{-1}((-\infty,a))$ if and only if $f(x,y) < a $ for all $y \in Y$ (use compactness of $Y$ to see that $f(x,y)<a \text{ for all } y$ implies $g(x)=\sup_y f(x,y) < a$). This can be written as $(x,y) \in f^{-1}((-\infty,a))$ for all $y$.\\
        In the following we construct for an arbitrary $x \in g^{-1}((-\infty,a))$ an open neighborhood $U_x$ contained in $g^{-1}((-\infty, a))$, thus proving that it is an open set.
        From above, we know that $(x,y) \in f^{-1}((-\infty,a))$ for all $y$. As $f^{-1}((-\infty,a))$ is open in $X \times Y$, we can find open 'box' neighborhoods $x \in U_{(x,y)}\subseteq X$ and $y \in V_{(x,y)} \subseteq Y$ such that $U_{(x,y)} \times V_{(x,y)} \subseteq f^{-1}((-\infty,a))$. Now, $\cup_{y \in Y} V_{(x,y)}$ is an open covering of the compact space $Y$ and thus admits a finite sub-covering $Y' \subseteq Y$. Then, $U_x \colon = \cap_{y \in Y'} U_{(x,y)}$ is an open neighborhood of $x$ as the intersection of finitely many open sets containing $x$. Moreover,
        \begin{equation*}
            U_x \times Y = U_x \times \bigcup_{y \in Y'} V_{(x,y)} = \bigcup_{y \in Y'} U_x \times V_{(x,y)} \subseteq \bigcup_{y \in Y'} U_{(x,y)} \times V_{(x,y)} \subseteq f^{-1}((-\infty, a)).
        \end{equation*}
        Hence, by the above equivalence, we see that $U_x \subseteq g^{-1}((-\infty, a))$ which concludes the proof.        
    \end{proof}
\end{lem}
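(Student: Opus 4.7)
The plan is to prove continuity of $g$ at an arbitrary point $x_0 \in X$ by separately establishing lower and upper semicontinuity. As a preliminary observation, since $Y$ is compact and $f(x_0, \cdot) \colon Y \to \mathbb{R}$ is continuous, the supremum $g(x_0) = \sup_y f(x_0, y)$ is actually attained at some point $y^* \in Y$, so $g(x_0) = f(x_0, y^*)$.

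For lower semicontinuity, I would fix $\varepsilon > 0$ and invoke continuity of $f$ at $(x_0, y^*)$ to produce a box neighborhood $U \times V$ of $(x_0, y^*)$ on which $f > f(x_0, y^*) - \varepsilon = g(x_0) - \varepsilon$. Then for any $x \in U$, picking $y^* \in V$ as a witness yields $g(x) \geq f(x, y^*) > g(x_0) - \varepsilon$, which is lower semicontinuity at $x_0$.

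For upper semicontinuity I would argue by a tube-lemma-style compactness argument. Fix $\varepsilon > 0$. For each $y \in Y$ one has $f(x_0, y) \leq g(x_0) < g(x_0) + \varepsilon$, so continuity of $f$ at $(x_0, y)$ yields a box neighborhood $U_y \times V_y \subseteq X \times Y$ of $(x_0, y)$ on which $f < g(x_0) + \varepsilon$. The collection $\{V_y\}_{y \in Y}$ is an open cover of $Y$; by compactness, extract a finite subcover $V_{y_1}, \ldots, V_{y_n}$, and let $U := \bigcap_{i=1}^n U_{y_i}$, still an open neighborhood of $x_0$. For any $x \in U$ and any $y \in Y$ we have $y \in V_{y_i}$ for some $i$, hence $(x, y) \in U_{y_i} \times V_{y_i}$ and $f(x, y) < g(x_0) + \varepsilon$; taking the supremum over $y$ gives $g(x) \leq g(x_0) + \varepsilon$.

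The main obstacle is the upper semicontinuity step, which is where compactness of $Y$ is indispensable: it is needed both to ensure the supremum is attained in the lower semicontinuity part and, more fundamentally, to pass from a pointwise family of box neighborhoods to a single uniform neighborhood $U$ of $x_0$. Dropping compactness breaks the conclusion (e.g.\ take $Y = \mathbb{R}$ and $f(x,y) = xy$). An essentially equivalent, slightly more economical presentation would instead verify directly that the preimages $g^{-1}((b, \infty))$ and $g^{-1}((-\infty, a))$ of subbasic open sets in $\mathbb{R}$ are open in $X$, with the first following from the fact that projections are open maps and the second following from the same finite-subcover argument as above; this is the route taken in the author's proof.
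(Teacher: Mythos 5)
Your proof is correct and is essentially the same argument as the paper's: your upper semicontinuity step is exactly the paper's tube-lemma/finite-subcover proof that $g^{-1}((-\infty,a))$ is open, and your lower semicontinuity step is a pointwise $\varepsilon$-version of the paper's observation that $g^{-1}((b,\infty)) = \pi_X(f^{-1}((b,\infty)))$ is open because projections are open maps (your use of attainment of the supremum is a harmless extra). Nothing is missing.
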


    \begin{proof}[Proof of Proposition \ref{Proposition: Existence and uniqueness of best replies}]
        Existence of solutions to \eqref{eq: ex ante worst-case receiver simplified} and \eqref{eq: interim worst-case receiver simplified} is immediate by continuity of the involved functionals, c.f., Lemma \ref{Lemma: topology supremum continuous if Y is compact}.\\
        Turning towards uniqueness, consider the ex-ante case first, considering the equivalent problem \eqref{eq: ex ante worst-case receiver similar to interim}. Note that if $\mu_0(C(\w))>0$, for every fixed $G$ the mapping
        \begin{equation*}
            h_G(\alpha) \colon = \sum_{\w} (1-\pup) \mu_0(C(\w))\cdot \E_{C(\w)}[\ell(\norm{t-\alpha(\w)})] + \pup G(\w) \cdot \E_{\mu_0}[\ell(\norm{t-\alpha(\w)})]
        \end{equation*}
        is strictly convex since $T$ is a strictly convex space and $\ell$ is strictly increasing and strictly convex. As $\mathcal{G} \subseteq \Delta(\W)$ is closed and thus compact, the supremum $\sup_{G \in \mathcal{G}} h_G(\alpha)$ is attained and consequently also a strictly convex function. Hence, uniqueness in the ex-ante case follows. If $\mu_0(C(\w))=0$ choosing $\alpha(\w) = \apool$ is optimal as only the term $\pup G(\w) \E_{\mu_0}[\ell(\norm{t-\alpha(\w)})]$ contains $\alpha(\w)$. This choice is only unique if $G(\w) >0$ under the worst-case $G$, otherwise any value is optimal.\\
        Turning towards the interim case, note that if $\mu_0(C(\w))>0$ the Bayesian update is well-defined and expression \eqref{eq: interim worst-case receiver simplified} defines a strictly convex function with unique minimizer as above. If $\mu_0(C(\w))=0$, then $\kappa = 1$ for all $G$ with $G(\w) >0$, leading to $\alpha_{\interim}(\w) = \apool$ and otherwise $\mu^{\pi,p,G}_{\w} = \mu_0$ by convention, again resulting in $\apool$ being optimal.
    \end{proof}

    \begin{proof}[Proof of Proposition \ref{Proposition: Existence of efficient ex-ante wc equilibrium}]
    We first identify the efficient worst-case equilibrium as a solution to the joint minimization problem and then argue that it cannot be babbling.

    To start off, recall that both agents face the same ex-ante worst-case expected loss. A solution to the joint minimization problem thus necessarily involves mutual best replies and guarantees efficiency. We are thus going to show that
        \begin{align*}
            \min_{(\pi,\alpha)} \max_{(p,G) \in \mathcal{P} \times \mathcal{G}} (1-p) \cdot \E_{\mu_0}[\ell(\norm{t-\alpha(\pi(t))})] + p \cdot \sum_{\w} G(\w) \cdot \E_{\mu_0}[\ell(\norm{t-\alpha(\w)})]
        \end{align*}
        is attained by a strategy profile $(\pi, \alpha)$. We can internalize the best reply of the sender, given by \eqref{eq: ex-ante sender optimization} and its subsequent discussion, leading towards solving
        \begin{equation*}
            \min_{\alpha} \  (1-\pup) \cdot \int_T \min_{\v} \ell(\norm{t-\alpha(\v)}) \, \mu_0(\mathrm{d}t) + \pup \max_{G \in \mathcal{G}} \sum_{\w} G(\w) \cdot \int_T \ell(\norm{t - \alpha(\w)})\, \mu_0(\mathrm{d}t).
        \end{equation*}
        To show that the above expression can be minimized by a suitable choice of $\alpha$ it is sufficient to show that the functional is continuous. Applying Lebesgue's dominated convergence theorem we see that all the integrals are continuous in $\alpha$ (noting that $\v$ runs through a finite set). Since $\mathcal{G}$ is compact and the function $T^{\# \W} \times \mathcal{G} \to \R, (\alpha, G) \mapsto \sum_{\w} G(\w) \cdot \int_T \ell(\norm{t - \alpha(\w)})\, \mu_0(\mathrm{d}t)$ is continuous, so is its $\max$ over $\mathcal{G}$ by Lemma \ref{Lemma: topology supremum continuous if Y is compact}, proving continuity. Thus, there exists an $\alpha$ minimizing the above functional and consequently also a profile $(\pi, \alpha)$ minimizing the ex-ante worst-case expected loss.\hfill \#\bigskip
        
        We now construct $(\pi,\alpha)$ which has a loss strictly smaller than $L_0$. Since the ex-ante efficient worst-case equilibrium is the joint minimization problem, it cannot be babbling. To this end, choose a hyperplane splitting $T$ into two convex parts $C_1,C_2$ with $C_1 = T\setminus C_2$ being closed, $\apool \notin C_1$ and $0<\mu_0(C_1)<1$. This is always possible since $\mu_0$ is absolutely continuous w.r.t.\ the Lebesgue measure. Let $\v_1,\v_2 \in \W$ be two distinct words and define $\pi$ by $\pi(t) = \v_i$, if $t \in C_i$. Consider interpretation maps of the following form for any $s \in T$:
        \begin{equation}
        \alpha_s(\w) = \begin{cases}
        s &, \w = \v_1,\\
        \apool &, \text{otherwise}.
        \end{cases}
        \end{equation}
        Define $s^*$ to be the unique minimizer of
        \begin{align}
        &\argmin_s \max_{(p,G) \in \mathcal{P} \times \mathcal{G}}\min_s L(\pi,\alpha_s)\nonumber\\
        =&\argmin_s  \ (1-\pup) \mu_0(C_1) \E_{C_1}[\ell(\norm{t-s})] + \pup \overline{G}(\v_1) \E_{\mu_0}[\ell(\norm{t-s})],
        \end{align}
        which is given by
        \begin{equation}
            s^* = \frac{(1-\pup) \mu_0(C_1)}{(1-\pup) \mu_0(C_1) + \pup \overline{G}(\v_1)} \alpha_{C_1} + \frac{\pup \overline{G}(\v_1)}{(1-\pup) \mu_0(C_1) + \pup \overline{G}(\v_1)} \apool .
        \end{equation}
        Since $C_1$ does not contain $\apool$ and $\mu_0(C_1)>0$ we have $\alpha_{C_1} \neq \apool$ and hence also $s^*\neq \apool$. By construction, thus $\max_{p,G} L(\pi,\alpha)<L_0$.
    \end{proof}

\begin{proof}[Proof of Lemma \ref{Lemma: regular words properties}]
    \emph{Sufficiency:} Revisit the interim problem in expression \eqref{eq: interim worst-case receiver simplified}. For any value of $s$, the worst-case $\kappa$, indicating the weight in the convex combination of the expectations about the cell and the state space, is attained by either $\underline{\kappa}$ or $\overline{\kappa}$. Consequently, the $\max$ can be taken over the set $\{ \underline{\kappa} , \overline{\kappa} \}$. Both convex combinations define a strictly convex function in $s$ with unique local minimum at $\beta_{\kappa} \colon = (1-\kappa) \alpha_{C(\w)} + \kappa \apool$ ($\kappa \in \{ \underline{\kappa} , \overline{\kappa} \}$). Hence the optimal value for $s$ must lie between these two, by the argument given in Proposition \ref{Proposition: Optimizer on line segment generalized in proof}. The worst-case is thus certainly invoked by $\overline{\kappa}$ and leads to $\alpha_{\interim}(\w) = \beta_{\overline{\kappa}}$ if plugging in $\beta_{\overline{\kappa}}$ in both expectation operators still favors the one of the cell $C(\w))$. Formally, this is expressed by
    \[
    \begin{aligned}
        &&\E_{\mu_0,C(\w)}[\norm{t-\beta_{\overline{\kappa}}}_2^2] &\leq \E_{\mu_0}[\norm{t-\beta_{\overline{\kappa}}}_2^2]\\
        \iff&& \Tr_{\mu_0,C(\w)} + \norm{\alpha_{C(\w)} - \beta_{\overline{\kappa}}}_2^2 &\leq \Tr_{\mu_0} + \norm{\apool - \beta_{\overline{\kappa}}}_2^2\\
        \iff&& \Tr_{\mu_0,C(\w)} + \overline{\kappa} \cdot \norm{\alpha_{C(\w)} - \apool}_2^2 &\leq \Tr_{\mu_0} + (1-\overline{\kappa})\cdot\norm{\apool - \alpha_{C(\w)}}_2^2\\
        \iff&& (2\overline{\kappa}-1) \cdot \norm{\alpha_{C(\w)}-\apool}_2^2 &\leq \Tr_{\mu_0} - \Tr_{\mu_0,C(\w)}.
    \end{aligned}
    \]
    \emph{Existence:} Assume by means of contradiction that we have for all $\w$
    \[
    \begin{aligned}
       &&     (2\overline{\kappa}-1) \cdot \norm{\alpha_{C(\w)}-\apool}_2^2 &> \Tr_{\mu_0} - \Tr_{\mu_0,C(\w)}\\
        \implies &&\norm{\alpha_{C(\w)} - \apool}_2^2 &> \Tr_{\mu_0} - \Tr_{\mu_0, C(\w)}\\
        \iff&& \E_{\mu_0,C(\w)}[\norm{t-\apool}_2^2] &> \E_{\mu_0}[\norm{t-\alpha_0}_2^2] = L_0\\
        \iff&& \int_{C(\w)}\norm{t-\apool}_2^2 \mu_0(\mathrm{d}t) &> \mu_0(C(\w)) \cdot \E_{\mu_0}[\norm{t-\alpha_0}_2^2].
    \end{aligned}
    \]
    Adding up over all $\w$ yields the contradiction $L_0 > L_0$.
\end{proof}

    \begin{proof}[Proof of Proposition \ref{Proposition: Optimizer on line segment generalized in proof}]
    If $\mu_0(C(\w)) = 0$ we have $\alpha_{\ex}(\w) = \alpha_{\interim}(\w) = \apool$ which trivially lies on any line segment including $\apool$. Fix thus any $\w$ in the support of $\pi$ in the following.\\
        \emph{Ex-ante best reply:} Under quadratic loss, the ex-ante problem \eqref{eq: ex ante worst-case receiver similar to interim} of the receiver can be re-stated as
        \begin{align*}
            \min_{\alpha} \max_{G} \sum_{\w} (1-\pup) \mu_0(C(\w))\cdot \norm{\alpha_{C(\w)}-\alpha(\w)}_2^2 + \pup G(\w) \cdot \norm{\apool-\alpha(\w)}_2^2.
        \end{align*}
        If now $\alpha(\w)$ did not lie on the line segment between $\alpha_{C(\w)}$ and $\apool$, one can reduce the distance to both, $\alpha_{C(\w)}$ and $\apool$ by choosing the closest point to $\alpha(\w)$ on said line segment, reducing the loss in the above functional.\\
        \emph{Interim best reply:} Under quadratic loss, the interim problem \eqref{eq: interim worst-case receiver simplified} of the receiver is given as a convex combination of two functions in the distance of $s$ to $\alpha_{C(\w)}$ resp.\ $\apool$:
        \begin{align*}
            \min_{s} \max_{p,G} \ &(1-\kappa) \cdot \left( \E_{\mu_0, C(\w)}[\norm{t}_2^2] + \norm{\alpha_{C(\w)}-s}_2^2 - \norm{\alpha_{C(\w)}}_2^2 \right)\\
            &+ \kappa \cdot \left( \E_{\mu_0}[\norm{t}_2^2] + \norm{\apool-s}_2^2 - \norm{\apool}_2^2 \right),
        \end{align*}
        where $\kappa = \frac{p \cdot G(\w)}{(1-p) \mu_0(C(\w)) + pG(\w)}$. If now $s$ was an action which does not lie on the line segment between $\alpha_{C(\w)}$ and $\apool$, switching to the closest point $s^*$ from $s$ which does lie on the line segment strictly decreases the distance to both, $\alpha_{C(\w)}$ and $\apool$. Hence, also the worst-case expected loss is strictly decreased, irrespective of a possible change in $\kappa$ by the induced worst-case beliefs.
        
    \end{proof}

    \begin{proof}[Proof of Theorem \ref{Theorem: interim strictly more pooling than ex ante}]
    We first prove the weak inequality. 
    
        By Proposition \ref{Proposition: Optimizer on line segment generalized in proof}, the ex-ante problem \eqref{eq: ex ante worst-case receiver similar to interim} can be equivalently written as finding appropriate weights $\lambda_{\w} \in [0,1]$ to pin down the solution's position on the line segment between $\alpha_{C(\w)}$ and $\apool$. That way, the problem is reduced to a one dimensional one in each word $\w$, i.e.,
        \begin{equation}
            \argmin_{(\lambda_{\w})_{\w}} (1-\pup) \cdot \sum_{\w} \mu_0(C(\w)) \cdot ((1-\lambda_{\w}) \cdot d_{\w})^2 + \pup \max_{G} \sum_{\w} G(\w) \cdot (\lambda_{\w} \cdot d_{\w})^2,\label{eq: ex ante expressed with percentages}
        \end{equation}
        where $d_{\w} \colon = \norm{\alpha_{C(\w)} - \apool}_2$. The induced actions are thus $\lambda_{\w} \cdot \alpha_{C(\w)} + (1-\lambda_{\w}) \cdot \apool$.
        
        Consider any $\w_0$. If $\mu_0(C(\w)) = 0$ or $d_{\w_0} = 0$, we have $\alpha_{\ex}(\w) = \apool = \alpha_{\interim}(\w)$. Assume thus $\mu_0(C(\w_0))>0$ and $d_{\w_0} > 0$. Now fix a profile $(\lambda_{\w})_{\w}$ of weights and define the auxiliary function $f_{\w_0,G}$ in $\lambda \in [0,1]$ for any fixed $G$ via
        \begin{align*}
            f_{\w_0,G}(\lambda) \colon = &(1-\pup) \cdot \left( \mu_0(C(\w_0) \cdot ((1-\lambda) d_{\w_0})^2 + \sum_{\w \neq \w_0} \mu_0(C(\w)) \cdot ((1-\lambda_{\w}) \cdot d_{\w})^2 \right)\\
            &+ \pup \left( G(\w_0) \cdot (\lambda \cdot d_{\w_0})^2 +  \sum_{\w' \neq \w_0} G(\w') \cdot (\lambda_{w'} \cdot d_{\w'})^2 \right).
        \end{align*}
        The function is strictly convex and its derivative is given by
        \begin{equation}
            f_{\w_0,G}'(\lambda) = 2 \cdot d_{\w_0}^2 \cdot \left( ((1-\pup) \cdot \mu_0(C(\w_0)) + \pup G(\w_0)) \cdot \lambda - (1-\pup) \mu_0(C(\w_0))  \right).\label{eq: derivative aux function}
        \end{equation}
        Since $(1-\pup)\mu_0(C(\w_0)) + \pup G(\w_0) >0$, the function $f_{\w_0, G}$ has a unique minimum given by
        \begin{equation}
            \lambda_{\w_0}(G(\w_0)) \colon = \frac{(1-\pup) \mu_0(C(\w_0))}{(1-\pup) \mu_0(C(\w_0)) + \pup G(\w_0)}.\label{eq: local optimum weight problem in one variable}
        \end{equation}
        
        Recall that the interim optimal action for a regular word corresponds to the value $\lambda_{\interim, \w_0} = \lambda_{\w_0}(\overline{G}(\w_0))$. Evidently, $\lambda_{\interim,\w_0} \leq \lambda_{\w_0}(G(\w_0))$ for all $G$. Denote by $(\lambda_{\ex,\w})_{\w}$ the weights induced by $\alpha_{\ex}$ which is the solution to \eqref{eq: ex ante expressed with percentages}. We thus eventually need to show $\lambda_{\interim,\w_0} \leq \lambda_{\ex,\w_0}$.\\
        
        If $\lambda_{\interim,\w_0} = 0$, we are done. Consider now $0 < \lambda_{\interim, \w_0}$. In the following, we show for any $0< \lambda_0 < \lambda_{\interim, \w_0}$ the inequality $\max_{G} f_{\w_0,G}(\lambda_0) > \max_G f_{\w_0,G} (\lambda_{\interim,\w_0})$. This will imply $\lambda_0 \neq \lambda_{\ex, \w_0}$ for all $\lambda_0 < \lambda_{\interim, \w_0}$ and thus $\lambda_{\interim, \w_0} \leq \lambda_{\ex, \w_0}$ to conclude the proof.\\
        In order to show $\max_{G} f_{\w_0,G}(\lambda_0) > \max_G f_{\w_0,G} (\lambda_{\interim,\w_0})$, we do so for all $G$ uniformly. Recall
        \begin{equation}
            f_{\w_0,G}'(\lambda) = 2 \cdot d_{\w_0}^2 \cdot \left( ((1-\pup) \cdot \mu_0(C(\w_0)) + \pup G(\w_0)) \cdot \lambda - (1-\pup) \mu_0(C(\w_0))  \right).
        \end{equation}
        Plugging in any $\lambda = \lambda_{\interim,\w_0} - \delta \geq 0$ for $\delta >0$ yields
        \begin{equation}
             f'_{\w_0,G}(\lambda) \leq f'_{\w_0,\overline{G}}(\lambda) = - \underbrace{2d_{\w_0}^2 \cdot ((1-\pup) \cdot \mu_0(C(\w_0)) + \pup \overline{G}(\w_0))}_{=\colon C >0} \cdot \delta.
        \end{equation}
        Since the right hand side of the above equation is negative and $f_{\w_0,G}'' <0$ we get the subsequent inequality: By using that the derivative on $[\lambda_0, \tfrac{1}{2} \cdot (\lambda_{\interim,\w}- \lambda_0)]$ is bounded above by $-C \cdot \tfrac{1}{2} \cdot (\lambda_{\interim,\w}- \lambda_0)$ on the interval of length $\tfrac{1}{2} \cdot (\lambda_{\interim,\w}- \lambda_0)$ we obtain
        \begin{equation*}
            f_{\w_0,G}(\lambda_0) \geq f_{\w_0,G}(\lambda_{\interim,\w_0}) + \underbrace{C \cdot \left( \frac{\lambda_{\interim, \w_0} - \lambda_0}{2} \right)^2}_{=\colon C'}.
        \end{equation*}
        Since $C'>0$ is independent of $G$, we find $\max_{G} f_{\w_0,G}(\lambda_0) > \max_G f_{w_0,G} (\lambda_{\interim,\w_0})$.\hfill\#\bigskip

        Let us now address the question of strict inequalities. We already know that $\lambda_{\interim, \w} \leq \lambda_{\ex, \w}$, thus it suffices to prove the existence of a vector $(\lambda_{\w})_{\w}$ that improves upon $\lambda_{\interim, \w}$ in \eqref{eq: ex ante expressed with percentages}. To this end, we compare $(\lambda_{\interim, \w})_{\w}$ to $(c \cdot \lambda_{\interim, \w})_{\w}$ for $c >0$. Note that the $\argmax$-set
        \begin{equation*}
            \argmax_G \sum_{\w} G(\w) \cdot (c \cdot \lambda_{\interim, \w} \cdot d_{\w})^2
        \end{equation*}
        is the same for all $c >0$, especially also for $c = 1$. Thus, restricting our variables to the locus $(c \cdot \lambda_{\interim, \w})_{\w}$ allows us to fix any $G \in \argmax_{G'} \sum_{\w} G'(\w) \cdot ( \lambda_{\interim, \w} \cdot d_{\w})^2$ and we have that the following expression
        \begin{align}
            &(1-\pup) \cdot \sum_{\w} \mu_0(C(\w)) \cdot ((1-c \cdot \lambda_{\interim, \w}) \cdot d_{\w})^2 + \pup \max_{G'} \sum_{\w} G'(\w) \cdot (c \cdot \lambda_{\interim, \w} \cdot d_{\w})^2\nonumber\\
            = &(1-\pup) \cdot \sum_{\w} \mu_0(C(\w)) \cdot ((1-c \cdot \lambda_{\interim, \w}) \cdot d_{\w})^2 + \pup \sum_{\w} G(\w) \cdot (c \cdot \lambda_{\interim, \w} \cdot d_{\w})^2\label{eq: ex ante problem kappa}
        \end{align}
        is smooth in $c>0$. If $(\lambda_{\interim, \w})_{\w}$ was optimal in the ex-ante problem \eqref{eq: ex ante expressed with percentages}, $c=1$ would be a local minimum of \eqref{eq: ex ante problem kappa}. Checking the first-order-condition, however, yields the unique local minimum at
        \begin{equation*}
            c^* = \frac{\sum_{\w} \lambda_{\interim,\w} \cdot d_{\w}^2\cdot (1-\pup) \cdot \mu_0(C(\w))}{\sum_{\w} \lambda_{\interim,\w}^2 \cdot d_{\w}^2 \cdot ((1-\pup)\cdot \mu_0(C(\w)) + \pup \cdot G(\w))}.
        \end{equation*}
        Note that for each summand in the denominator, we have the inequality
        \begin{align*}
            &\lambda_{\interim,\w}^2 \cdot d_{\w}^2 \cdot ((1-\pup)\cdot \mu_0(C(\w)) + \pup \cdot G(\w))\\
            \leq& \lambda_{\interim,\w} \cdot \lambda_{\interim,\w} \cdot d_{\w}^2 \cdot ((1-\pup)\cdot \mu_0(C(\w)) + \pup \cdot \overline{G}(\w))\\
            =& \lambda_{\interim,\w} \cdot d_{\w}^2 \cdot (1-\pup)\cdot \mu_0(C(\w)),
        \end{align*}
        where the inequality must be strict for some $\w$ since $\# \mathcal{G}\geq 2$. Consequently, $c^* > 1$, which shows that $(\lambda_{\interim, \w})_{\w}$ is not optimal for the ex-ante problem.
    \end{proof}

\begin{proof}[Calculations to Example \ref{Example: Inconsistency on unit interval}]
    To find the Pareto-efficient ex-ante equilibrium, we proceed as in the proof of Proposition \ref{Proposition: Existence of efficient ex-ante wc equilibrium}. We make the action vector of receiver a variable and briefly write $a = \alpha(L), b = \alpha(R) \in T$ and may assume $a \leq b$ (otherwise exchange). Any $(a,b)$ defines a best response of the sender, cutting the interval in half at the midpoint $m = \tfrac{a+b}{2}$. To solve the ex-ante problem we may thus treat
    \begin{align}
        \min_{(a,b)} \ &(1-p) \left( \int_{-\tfrac{1}{2}}^m (t-a)^2 \, \mathrm{d}t +  \int_{m}^{\tfrac{1}{2}} (t-b)^2 \, \mathrm{d}t \right)\nonumber\\
        + &p \max_{g \in [0,1]} \left\{ (1-g) \cdot \int_{\tfrac{1}{2}}^{-\tfrac{1}{2}} (t-a)^2 \, \mathrm{d}t, g \cdot \int_{\tfrac{1}{2}}^{-\tfrac{1}{2}} (t-b)^2 \, \mathrm{d}t \right\}.
    \end{align}
    Using $\int_{\tfrac{1}{2}}^{-\tfrac{1}{2}} (t-s)^2 = s^2 + \Tr_{\mu_0,[-\tfrac{1}{2},\tfrac{1}{2}]}$, we can simplify the problem to
    \begin{equation}
        \min_{(a,b)} \ (1-p) \left( \int_{-\tfrac{1}{2}}^m (t-a)^2 \, \mathrm{d}t +  \int_{m}^{\tfrac{1}{2}} (t-b)^2 \, \mathrm{d}t \right) + p \max \{ a^2,b^2\}.\label{eq: example simplified interval}
    \end{equation}
    To better handle the $\max$-function, we distinguish two cases. Either $a^2 = b^2$ or not. In the following, we argue that $a^2 \neq b^2$ cannot lead to a Pareto-efficient equilibrium.\\
    To this end, we may assume $\betrag{a} < \betrag{b}$ (if there is an equilibrium with the inverse strict inequality, there is also one considering $s \mapsto -s$ by symmetry), i.e., $\max\{ a^2,b^2 \} = b^2$. In equilibrium, $a$ should be a minimum in \eqref{eq: example simplified interval}. Taking first-order conditions, we find the only local minimum at $a = \tfrac{1}{3}(b-1)$ (the other is a maximum). Especially, $a\leq0\leq b$. We plug this back into the problem to get a function in $b$ only, where we restrict to $b \in [\tfrac{1}{4},\tfrac{1}{2}]$ to maintain $\betrag{a} < \betrag{b}$. Now, we find two local extreme points, the only minimum being at $b^* = -\tfrac{4 \, p - 3 \, \sqrt{8 \, p + 1} + 5}{8 \, {\left(p - 1\right)}}$. However, the respective $a^* = 1/3 (b^*-1)$ is larger in absolute value for all $p \in [0,1]$. Consequently, there is no equilibrium with $a^2 \neq b^2$.\\
    Consider now $a^2 = b^2$. If $a=b$, they must be equal to $\apool = 0$ leading to a babbling equilibrium. Assume thus $a=-b$, which always leads to the communication device $\pi$ stated. From \eqref{eq: example simplified interval} taking first-order conditions leads to $b^* = \tfrac{1}{4}(1-p) = -a^*$. Indeed, this equilibrium outperforms the babbling equilibrium and must thus be the ex-ante Pareto-efficient equilibrium.\hfill \#\\

    Fix $\pi$ and turn towards the interim optimal actions. We apply the sufficient criterion decoded in Lemma \ref{Lemma: regular words properties}. Indeed, we have for, say, $\w = R$
    \begin{align}
        \Tr_{\mu_0} - \Tr_{\mu_0, [0,\tfrac{1}{2}]} &= \tfrac{1}{12} (1 - \tfrac{1}{4})= \tfrac{1}{16} \geq (2\overline{\kappa} - 1) \tfrac{1}{16} = (2\overline{\kappa} - 1) (\alpha_{C(R)} - \apool)^2.
    \end{align}
    Note that $\overline{\kappa} = \tfrac{p \cdot 1}{p \cdot 1 + (1-p) \tfrac{1}{2}} = \frac{2p}{1+p}$. Consequently, we have $\alpha_{\interim}(R) = (1-\overline{\kappa}) \alpha_{C(R)} + \overline{\kappa} \apool = \tfrac{1}{4} \tfrac{1-p}{1+p}$, similarly $\alpha_{\interim}(L) = - \alpha_{\interim}(R)$. Especially, the action profile and $\pi$ together form an interim equilibrium.
\end{proof}

\begin{proof}[Proof of Proposition \ref{Proposition: Shannon and noisy talk channel comparison}]
\emph{Sufficiency:} Given $n=1$, define $\epsilonNT$ by setting $G(\w) \colon = (m+1)^{-1}$ and $p_{\v}=p\colon = q \cdot \tfrac{m+1}{m} \in (0,1)$ and check for $\w,\v \in \W, \w \neq \v$
\begin{align}
&\epsilonNT(\v \mid \v) = 1-p + \frac{p}{m+1} = 1- q \cdot \frac{m+1}{m} + \frac{q}{m} = 1-q = \epsilonS(\v \mid \v),&\\
&\epsilonNT(\w \mid \v) = \frac{p}{m+1} = \frac{q}{m} = \epsilonS(\w \mid \v).&
\end{align}
Thus $\epsilonS(\w \mid \v) = \epsilonNT(\w \mid \v)$ for all $\w,\v \in \W$.\bigskip

\emph{Necessity:} Assume $n \geq 2$ and assume by means of contradiction that there are $q,(p_{\v})_{\v},G$ such that $\epsilonS \equiv \epsilonNT$. Note, that we cannot have $p_{\v} = 0$ for any $\v$. Otherwise we have $\mathds{1}_{\v}(\w) = \epsilonNT(\w \mid \v) = \epsilonS(\w \mid \v) = (1-q)^{n-d(\w,\v)} \cdot \left( \tfrac{q}{m+1} \right)^{d(\w,\v)}$. Plugging in $\w = \v$ and any $\w \neq \v$ we convince ourselves that this is only possible if $q = 0$ which was excluded. Likewise, we cannot have $G(\v) = 0$ for any $\v$. Otherwise $\epsilonNT(\w \mid \v) = (1-p_{\w}) \mathds{1}_{\w}(\v)$. Plugging in any $\w \neq \v$ this is zero, which is only possible for the Shannon-type channel if $q=0$ which we excluded. We can thus subsequently assume $p_{\v}, G(\v) \neq 0$ for all $\v$.\\%
We now make the following observation: Let $\v,\w,\w'$ be words with $d(\w,\v) = d(\w',\v)$. Then $p_{\v} G(\w) = \epsilonNT(\w \mid\v) = \epsilonS(\w \mid \v) = \epsilonS(\w'\mid\v)=\epsilonNT(\w'\mid\v) = p_{\v}G(\w')$ and thus $G(\w) = G(\w')$. By using $\epsilonNT(\v \mid\w) = \epsilonNT(\v \mid \w')$ we similarly obtain $p_{\w} = p_{\w'}$.\\
Assume for the moment that $\#\mathcal{A} \geq 3$ and take any $\w,\w'$. Then there always is a $\v$ with $d(\w,\v) = d(\w',\v)$. Indeed, one can pick $\v$ to be the word the entries $\v_i$ of which are equal to $\w_i$ if $\w_i = \w'_i$ and choose a letter different from both $\w_i$ and $\w'_i$ otherwise. By the argument above we find $p \colon = p_{\w} = p_{\w'}$ and $g \colon = G(\w) = G(\w')$. Since $\w,\w'$ were arbitrary, they must be the same for all words. However, now taking pairwise different $\v,\w,\w'$ with $d(\w,\v) = 1, d(\w,\v) = 2$, which is possible since $n \geq 2$, we find $\epsilonNT(\w \mid\v) =pg = \epsilonNT(\w'\mid \v)$, but $\epsilonS(\w \mid \v) = (1-q)^{n} \cdot \widetilde{q} \neq (1-q)^{n} \cdot \widetilde{q}^{2} = \epsilonS(\w' \mid \v)$ where $\widetilde{q} \colon = \tfrac{q}{(1-q)m}>0$.\\
Assume now $\mathcal{A} = \{ 0,1\}$. We define an equivalence relation on $\W$ via $\v \sim \w$ if $2 \mid d(\w,\v)$. Indeed, observe that $\sum_i \v_i - \sum_i \w_i \equiv_2 \sum_{i, \v_i \neq \w_i} 1  = d(\w, \v)$ to show transitivity, where $\equiv_2$ means $\text{mod} \, 2$. The message space thus splits into two equivalence classes, represented by $\w_0 = (0,0,\dots)$ and $\w_1 = (1,0,\dots)$. We have that $g_0 \colon = G(\w_0)$, $p_0 \colon = p_{\w_0}$ are the same for all elements of $\w_0$'s equivalence class, similarly for $\w_1$. Each class contains $2^{n-1}$ different words and since $G$ is a distribution, we find $g_0+g_1 = 2^{1-n}$. Furthermore $p_1 g_0 = \epsilonNT(\w_0 \mid \w_1) = \epsilonS(\w_0 \mid \w_1) = \epsilonS(\w_1 \mid \w_0) = \epsilonNT(\w_1 \mid \w_0) = p_0 g_1$, giving us $\tfrac{p_0}{p_1} = \tfrac{g_0}{g_1}$ and $1-p_0 + p_0 g_0 = \epsilonNT(\w_0 \mid \w_0) = \epsilonS(\w_0 \mid \w_0) = \epsilonS(\w_1 \mid \w_1) = \epsilonNT(\w_1 \mid \w_1) = 1-p_1 + p_1 g_1$ yielding $\tfrac{p_0}{p_1} = \tfrac{1-g_1}{1-g_0}$. From $\tfrac{g_0}{g_1} = \tfrac{1-g_1}{1-g_0}$ we obtain $(g_0-g_1)(g_0+g_1) = g_0-g_1$ and thus $g_0 = g_1$ since $g_0+g_1 \neq 0$. Hence also $p_0 = p_1$ and we find the same contradiction as in the previous argument.
\end{proof}

\begin{proof}[Proof of Remark \ref{Remark: uniqueness of Shannon noisy talk channel}]
Recall the proof of Proposition \ref{Proposition: Shannon and noisy talk channel comparison} and observe that for distinct $a,b \in \mathcal{A}$ we have $q = m \cdot p_a \cdot G(b)$ and $(1-q) = 1- p_a + p_a \cdot G(a)$. Especially, $p_a>0$. Combining this and both the equations we find $1 = G(a) + m\cdot G(b)$ for all $a,b \in \W, a\neq b$. Thus $G(a) + m \cdot G(b) = G(b) + m \cdot G(a)$ which implies $G(a) = G(b)$ if $m \geq 2$. But then also $p_a=p_b$ and the choice from the sufficiency part is the only one possible. If $n=1$ and $m=1$ we can choose any $G(a) \in (0,1)$ and get solutions for $G(b) \colon = 1-G(a), p_b = \tfrac{q}{G(a)}, p_a = \tfrac{q}{G(b)}$.
\end{proof}

\bibliography{Literatur}{}

\end{document}